\documentclass[a4paper,onecolumn,11pt,accepted=2025-11-12]{quantumarticle}
\pdfoutput=1
\usepackage[utf8]{inputenc}
\usepackage[english]{babel}
\usepackage[T1]{fontenc}
\usepackage{amsmath}
\usepackage{hyperref}

\usepackage{tikz}
\usepackage{lipsum}

\usepackage{amsmath,amssymb,amsthm,mathrsfs}
\usepackage{epsfig,epsf,subfigure,graphicx,graphics}
\usepackage{url,enumerate}
\usepackage{physics}
\usepackage{hyperref}

\DeclareMathAlphabet\mathrsfso      {U}{rsfso}{m}{n}
\usepackage{amssymb}\usepackage{mathtools}
\usepackage[toc,page]{appendix}
\usepackage{bbm}

\usepackage{xcolor}

\newtheorem{lemma}{Lemma}
\newtheorem{definition}{Definition}
\newtheorem{proposition}{Proposition}
\newtheorem{theorem}{Theorem}
\newtheorem{corollary}{Corollary}

\newtheorem*{proposition*}{Proposition 3}
\newtheorem*{lemma*}{Lemma 2}
\newtheorem*{lemma**}{Lemma 3}
\newtheorem*{lemma***}{Lemma 4}
\newtheorem*{theorem*}{Theorem 3}

\usepackage{algorithm}
\usepackage{algpseudocode}
\usepackage{algpseudocode}

\errorcontextlines\maxdimen
\makeatletter
\newcommand*{\algrule}[1][\algorithmicindent]{\makebox[#1][l]{\hspace*{.5em}\vrule height .75\baselineskip depth .25\baselineskip}}
\newcount\ALG@printindent@tempcnta
\def\ALG@printindent{
    \ifnum \theALG@nested>0
        \ifx\ALG@text\ALG@x@notext
            \addvspace{-3pt}
        \else
            \unskip
            \ALG@printindent@tempcnta=1
            \loop
                \algrule[\csname ALG@ind@\the\ALG@printindent@tempcnta\endcsname]
                \advance \ALG@printindent@tempcnta 1
            \ifnum \ALG@printindent@tempcnta<\numexpr\theALG@nested+1\relax
            \repeat
        \fi
    \fi
    }
\usepackage{etoolbox}
\patchcmd{\ALG@doentity}{\noindent\hskip\ALG@tlm}{\ALG@printindent}{}{\errmessage{failed to patch}}
\makeatother

\begin{document}

\title{Bounds in Sequential Unambiguous Discrimination of Multiple Pure Quantum States}

\author{Jordi Pérez-Guijarro}
\affiliation{SPCOM Group, Universitat Politècnica de Catalunya, Barcelona, Spain}

\author{Alba Pagès-Zamora}
\affiliation{SPCOM Group, Universitat Politècnica de Catalunya, Barcelona, Spain}

\author{Javier R. Fonollosa}
\affiliation{SPCOM Group, Universitat Politècnica de Catalunya, Barcelona, Spain}

\maketitle

\maketitle

\begin{abstract}
   Sequential methods for quantum hypothesis testing offer significant advantages over fixed-length approaches, which rely on a predefined number of state copies. Despite their potential, these methods remain underexplored for unambiguous discrimination. In this work, we derive performance bounds for such methods when applied to the discrimination of a set of pure states. The performance is evaluated based on the expected number of copies required. We establish a lower bound applicable to any sequential method and an upper bound on the optimal sequential method. The upper bound is derived using a novel and simple non-adaptive method. Importantly, the gap between these bounds is minimal, scaling logarithmically with the number of distinct states.
\end{abstract}

The problem of identifying a discrete set of states is central to quantum information science. Insights from this task have far-reaching implications, including setting theoretical bounds on quantum error mitigation (QEM) \cite{takagi2022fundamental}, as well as advancements in fields such as quantum cryptography \cite{acin2006secrecy} and quantum communication \cite{gisin2007quantum}.

As this task is so fundamental, it has a rich history of study and results. The study of this task began with binary state discrimination, that is, distinguishing between two states, $\sigma_1$ and $\sigma_2$. The optimal measurement is given by the Helstrom measurement \cite{Helstrom}, which yields a probability of error given by $ \frac{1}{2} \left( 1 - \left\| \pi_1 \sigma_1 - \pi_2 \sigma_2 \right\|_1 \right)$, where $\pi_i$ denotes the a priori probability of hypothesis $i$. With more copies of the states, the probability of error decreases exponentially, motivating the study and characterization of the decay rate of the probability of error. The optimal decay rate is given by the quantum Chernoff distance \cite{Chernoff_1, Chernoff_2}, which generalizes its classical counterpart. For the more general case where $N \geq 2$ states are considered, the optimal decay rate is also well-characterized and given by $\min_{i,j: i \neq j} C_Q(\sigma_i, \sigma_j)$ \cite{Multiple}, where $C_Q(\sigma, \rho)$ is the Chernoff distance between states $\sigma $ and $\rho$.

A variant of this problem is the unambiguous discrimination of states. In this setting, any decision made must be correct. To guarantee this requirement, the option of choosing not to decide is allowed. That is, for $N$ states, there are $N+1$ possible results of the discrimination process, with one completely uninformative. However, not every set of states allows for unambiguous discrimination, as certain constraints must be met. For example, for pure states and non-adaptive methods, the states must be linearly independent. This problem has been widely studied for the cases of $N=2$ \cite{herzog2005optimum, jaeger1995optimal, ivanovic1987differentiate, dieks1988overlap, peres1988differentiate} and $N=3$ \cite{bergou2012optimal, sentis2022online}. In contrast, for $N > 3$, finding an analytical solution for the procedure that minimizes the probability of an inconclusive answer becomes extremely complex.

The results discussed so far are about fixed-length methods, which involve a fixed number of state copies. Yet, other approaches, known as sequential methods, proposed for the task of quantum hypothesis testing in \cite{slussarenko2017quantum}, are also studied for both quantum hypothesis testing and unambiguous discrimination. Sequential methods, unlike fixed-length methods, deal with a random number of state copies. At the cost of this uncertainty, they offer improved performance \cite{martinez2021quantum,li2022optimal,perez2022quantum}. Interestingly, for the case of $N$ pure states, as shown in \cite{perez2022quantum}, using only simple non-collective measurements, we can achieve unambiguous discrimination using an expected number of copies that scales as $O(N)$. Much like fixed-length methods, the problem of unambiguous discrimination has been scarcely studied for the general case of $N$ states using sequential methods.

In this paper, we analyze the theoretical limits on the expected number of state copies required by sequential methods for the unambiguous discrimination of $N$ pure states. First, we derive a lower bound on the number of copies needed by any sequential method that performs unambiguous discrimination. Next, we establish an upper bound on the performance of the optimal sequential method. The gap between these bounds is of the order $O(\log N)$, which is relatively small.

\section{Quantum Hypothesis Testing and Unambiguous Discrimination}

Quantum hypothesis testing involves distinguishing between different quantum states or hypotheses based on measurement outcomes. Specifically, given multiple copies of an unknown state $\sigma$, the task is to determine which state from the set $\{\sigma_i\}_{i=1}^N$ it corresponds to. Here, we analyze a special case where all the states are pure, i.e., $\sigma_i = \ket{\psi_i}\bra{\psi_i}$ for every $i \in [N]$,  where $[N] := \{1, 2, \dots, N\}$. The decision $D \in [N]$ is a random variable (r.v.) that depends on the outcome of the measurements performed on the copies of state $\sigma$. The true hypothesis is denoted by $S$, i.e., $\sigma = \sigma_S$. Importantly, $S\in[N]$ is also a r.v., with probabilities given by the priors $\{\pi_i\}_{i=1}^N$.

Therefore, the probability of error is given by $\mathbb{P}(D \neq S)$ and for simplicity, it is denoted by $\mathbb{P}(\mathcal{E})$. Note that in fixed-length methods, the number of copies used to make a decision, denoted by $L$, is preassigned. However, in sequential methods, where measurements are performed in an online fashion, the decision is not made until some stopping criterion is met. Consequently, the number of copies may vary depending on the measurement results, so $L$ becomes a random variable.

In unambiguous discrimination, the decision must be either $D = S$, meaning the true hypothesis is identified, or $D = N+1$, which accounts for cases where the discrimination fails. The value $N+1$ is arbitrary and merely represents this additional outcome associated with the inconclusive response event, denoted by $\mathcal{I}$. For fixed-length methods, the inconclusive probability must always be strictly positive, whereas for sequential methods, it can also be exactly zero. Additionally, sequential methods have another advantage over fixed-length methods. No necessary condition is required for the existence of a sequential method that solves the unambiguous discrimination problem, whereas for fixed-length methods, it is necessary that the states $\{\ket{\psi_i}^{\otimes L}\}_{i=1}^N$ are linearly independent \cite{chefles1998unambiguous}. Throughout the paper, we use the shorthand $\mathbb{P}(\mathcal{I}|s)$ for $\mathbb{P}(\mathcal{I}|S=s)$, and $\mathbb{E}[L|s]$ for $\mathbb{E}[L|S=s]$.

Finally, for the sake of clarity, we introduce the formal definition of a sequential method for quantum unambiguous discrimination.

\begin{definition} A sequential method for quantum unambiguous discrimination is a method that can be expressed in the form of Algorithm \ref{alg_1}, and furthermore satisfies $\mathbb{P}(D\neq S)=0$.

\begin{algorithm}[H]
 \caption{}
 \label{alg_1}
 \begin{algorithmic}
 	\State \textbf{Input:} Access to a query function that outputs copies of the unknown state $\ket{\psi}$.
 	\State \textbf{Output:} Decision $d \in [N+1]$.

 	\State $i \gets 0$

 	\While{stopping criterion is not satisfied}
 		\State $\bullet$ Update the counter: $i \gets i + 1$
 		\State $\bullet$ Request $k_i$ copies of the unknown state $\ket{\psi}$, where $k_i$ may depend on the data from previous iterations.
 		\State $\bullet$ Measure state $\ket{\psi}\bra{\psi}^{\otimes k_i} \otimes \,\tau_{i-1}$ using measurement $\mathcal{L}_i$, obtaining outcome $x_i$, where $\tau_{i-1}$ denotes the collection of all post-measurement states from previous iterations. This detail is included to account for measurements that do not completely collapse the state, such as weak measurements. Similar to $k_i$, the measurement $ \mathcal{L}_i$ may depend on \( (\{x_j\}_{j=1}^{i-1}, \{k_j\}_{j=1}^{i-1}, \{\mathcal{L}_j\}_{j=1}^{i-1}) \).
 		\State $\bullet$ Check whether the stopping criterion is satisfied.
 	\EndWhile

        \State Take a decision $d\in [N+1]$, which is a function of the triplet $(\{x_j\}_{j=1}^{i},\{k_j\}_{j=1}^{i},\{\mathcal{L}_j\}_{j=1}^{i})$.
 \end{algorithmic}
\end{algorithm}

\end{definition}

Interestingly, aside from the distinction between fixed-length and sequential methods, additional distinctions can be made to further characterize these methods based on their properties. For instance, if the measurements used do not depend on previous outcomes, the method is said to be non-adaptive; otherwise, it is adaptive.

\section{Lower bounds on the expected number of copies}

In this section, we first present a lower bound on the expected number of copies used by any sequential method that unambiguously discriminates a set of pure states. Next, we apply the same technique to obtain an analogous bound for quantum hypothesis testing. Specifically, we adopt a worst-case approach, where we bound the expected value $\max_{s \in [N]} \mathbb{E}[L | s]$ rather than $\mathbb{E}[L]$. This approach allows for a more straightforward technical analysis and interpretation.

\subsection{Lower Bound in unambiguous discrimination}

We start by presenting the main result of this section, which establishes a bound on the expected number of copies required for unambiguous discrimination, and then proceed with its proof. Importantly, the results obtained in this section do not require any particular restriction on the sequential methods used or on the states to be discriminated.

\begin{theorem}\label{theorem_general_lower_bound}
    For any sequential method that distinguishes unambiguously a set of pure states $\{\ket{\psi_i}\}_{i=1}^N$, the maximum expected value of state copies $\max_{s\in[N]}\mathbb{E}[L|s]$ satisfies
    \begin{equation}\label{theorem_lower_bound_inconclusive_prop}
        \frac{1-\max_{s\in [N]} \mathbb{P}(\mathcal{I}|s)}{-\log \max_{i,j: i\neq j} |\bra{\psi_i}\ket{\psi_j}|^{2}}\left( \frac{\log 3}{3 }\right) \leq \left \lceil \max_{s\in[N]} \mathbb{E}[L|s]\right \rceil
    \end{equation}
\end{theorem}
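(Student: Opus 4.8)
The plan is to reduce the problem to distinguishing the two least distinguishable states and to quantify how slowly their distinguishability can grow with the number of consumed copies. Let $\gamma := \max_{i\neq j}|\bra{\psi_i}\ket{\psi_j}|$ and let $(i^\star,j^\star)$ be a pair attaining it; after rephasing one representative (a free choice that leaves every outcome probability unchanged) I may assume $\bra{\psi_{i^\star}}\ket{\psi_{j^\star}}=\gamma\ge 0$. Write $P_{\mathcal I}:=\max_s\mathbb P(\mathcal I|s)$ and $\bar L:=\max_s\mathbb E[L|s]$. I would run the given method under the two hypotheses $S=i^\star$ and $S=j^\star$ and compare the induced laws $\mathbb P_{i^\star},\mathbb P_{j^\star}$ of the full classical record $\omega$ (all outcomes together with $L$ and $D$). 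The engine of the proof is the Bhattacharyya overlap $F:=\sum_\omega\sqrt{\mathbb P_{i^\star}(\omega)\,\mathbb P_{j^\star}(\omega)}$, which I would squeeze between an upper bound coming from unambiguity and a lower bound coming from the slow decay of overlaps.

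For the upper bound I would use that unambiguity forces every record carrying a conclusive decision to have zero probability under at least one of the two hypotheses (a wrong conclusive decision never occurs), so the two laws overlap only on inconclusive records. Hence, by Cauchy--Schwarz, $F=\sum_{\omega:\,D=N+1}\sqrt{\mathbb P_{i^\star}(\omega)\,\mathbb P_{j^\star}(\omega)}\le\sqrt{\mathbb P(\mathcal I|i^\star)\,\mathbb P(\mathcal I|j^\star)}\le P_{\mathcal I}$.

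For the lower bound I would pass to a deferred-measurement (Naimark) picture in which the whole adaptive, variable-length procedure is realized coherently: copies are appended one at a time to a ``live'' register, each step's measurement and stopping rule is a unitary controlled by the record kept in superposition, and a final computational-basis readout reproduces $\omega$. Let $W(t)$ be the (complex) inner product of the two global vectors after $t$ copies have been consumed, split as $W(t)=S(t)+c_t$ into a frozen ``record'' part $S(t)$ and the live overlap $c_t$. Appending one fresh copy to the live branches multiplies $c_t$ by $\bra{\psi_{i^\star}}\ket{\psi_{j^\star}}=\gamma$ and leaves $S(t)$ untouched, while the controlled unitaries and the freezing of newly stopped branches preserve the total inner product; this yields the recursion $W(t+1)=W(t)-(1-\gamma)c_t$ with $W(0)=1$. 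Telescoping gives $W(\infty)=1-(1-\gamma)\sum_{t\ge 0}c_t$, and since $|W(\infty)|\le F$ and $|c_t|\le\sqrt{\mathbb P(L>t\,|\,i^\star)\,\mathbb P(L>t\,|\,j^\star)}$, Cauchy--Schwarz together with $\sum_{t\ge 0}\mathbb P(L>t\,|\,s)=\mathbb E[L\,|\,s]$ gives $\sum_t|c_t|\le\sqrt{\mathbb E[L|i^\star]\,\mathbb E[L|j^\star]}\le\bar L$. Therefore $P_{\mathcal I}\ge F\ge|W(\infty)|\ge 1-(1-\gamma)\bar L$, i.e.\ $\bar L\ge\frac{1-P_{\mathcal I}}{1-\gamma}$. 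The statement then follows from the elementary inequality $\frac{1}{1-\gamma}\ge\frac{\log 3}{3}\cdot\frac{1}{-\log\gamma^2}$ for $\gamma\in(0,1)$, which reduces to $6\log(1/\gamma)\ge(1-\gamma)\log 3$ and holds because $\log(1/\gamma)\ge 1-\gamma$, combined with $\bar L\le\lceil\bar L\rceil$.

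The main obstacle is making this coherent model and its overlap recursion rigorous in the full generality of Algorithm~\ref{alg_1}: adaptively chosen numbers of copies $k_i$, adaptively chosen and possibly weak measurements $\mathcal L_i$, and a data-dependent stopping time. I would handle it by serializing each step into single-copy appends with measurements deferred, representing the accumulated outcomes as a quantum record on which all later operations are coherently controlled, and verifying that ``append a copy'', ``apply a controlled unitary'', and ``freeze the stopped branches'' act on $S(t)$ and $c_t$ exactly as claimed. Two points deserve care: the phase normalization of the closest pair is essential, since only with $\gamma\ge 0$ does the per-copy factor become $1-\gamma$ rather than the larger $|1-\gamma e^{i\theta}|$; and convergence of $\sum_t c_t$ must be justified, which is guaranteed whenever $\mathbb E[L]<\infty$, the bound being vacuous otherwise.
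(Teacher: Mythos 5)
Your proof is correct, but it takes a genuinely different route from the paper's. The paper never touches the dynamics of the protocol: it converts any zero-inconclusive sequential method into a fixed-length method on $AK\lceil\max_s\mathbb{E}[L|s]\rceil$ copies with error at most $1/K^A$ (Markov's inequality plus $A$ independent repetitions, Lemma \ref{ref_sequential_to_fixed}), plays this against a one-shot $N$-ary error lower bound obtained by a POVM-splitting argument (Lemma \ref{first_lemma}, Corollary \ref{corollary_probability_of_error}), optimizes over $A$ and $K$ (the constant $\log 3/3$ is $\sup_{K\in\mathbb{N}}(\log K)/K$), and finally removes the $\mathbb{P}(\mathcal{I})=0$ assumption via Lemma \ref{lemma_3}. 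You instead run a hybrid-type overlap-evolution argument on a purification of the adaptive protocol restricted to the closest pair, reaching the intermediate bound $\max_s\mathbb{E}[L|s]\ge(1-P_{\mathcal{I}})/(1-\gamma)$ with $\gamma=\max_{i\ne j}|\bra{\psi_i}\ket{\psi_j}|$, and then the theorem via $-\log\gamma^{2}\ge 2(1-\gamma)\ge(1-\gamma)\tfrac{\log 3}{3}$. Two remarks. First, your intermediate bound is strictly stronger than \eqref{theorem_lower_bound_inconclusive_prop} --- by a factor of at least $6/\log 3\approx 5.5$, and unboundedly more for small $\gamma$ --- and it is tight: for $N=2$, repeating the optimal single-copy unambiguous (IDP) measurement until a conclusive outcome uses exactly $\mathbb{E}[L]=1/(1-\gamma)$ copies. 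This is notable because the paper explicitly remarks that its own technique cannot yield a better constant without new ideas; your argument is precisely such an alternative technique. Second, the step you flag as the main obstacle is real but surmountable, and one point must be made explicit when formalizing it: ``appending a copy only to the live branches'' cannot be naive global tensoring, since that would multiply the frozen part $S(t)$ by $\gamma$ as well and collapse the recursion to $W(t+1)=\gamma W(t)$; it must be modeled as the record-controlled, hypothesis-dependent isometry $V_s=P_{\mathrm{live}}\otimes\ket{\psi_s}+P_{\mathrm{stop}}\otimes\ket{0}$, which pads stopped branches with a fixed blank state (this isometry describes source plus protocol jointly, not an operation of the experimenter, so its $s$-dependence is harmless). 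With it, $V_{i^\star}^{\dagger}V_{j^\star}=\gamma P_{\mathrm{live}}+P_{\mathrm{stop}}$ yields exactly $W(t+1)=\gamma c_t+S(t)$, and the rest of your chain ($|c_t|\le\sqrt{\mathbb{P}(L>t|i^\star)\mathbb{P}(L>t|j^\star)}$, Cauchy--Schwarz, $F\le P_{\mathcal{I}}$ from unambiguity) checks out. What the paper's heavier machinery buys in exchange for its weaker constant is reusability: the same reduction adapts to nonzero error probability (Lemma \ref{ref_sequential_to_fixed_probability_of_error}, Theorem \ref{theorem_probability_of_error}) and to mixed states (Proposition \ref{generalization_theorem_general_lower_bound}), whereas your key inequality $F\le P_{\mathcal{I}}$ uses zero error in an essential way.
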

The most interesting aspect of this lower bound is that it is inversely proportional to the minimum pairwise Chernoff distance. This is because the Chernoff distance can be expressed as $ -\log \text{Tr}(\sigma_1 \sigma_2)$ when at least one of the two states is pure \cite{casa_2}. That is, the geometry of the states has a significant impact on the performance of the methods. Another interesting point about the bound is its dependence on the inconclusive probability, where a higher probability reduces the required number of copies, as expected.

To prove this theorem, we first need to derive some auxiliary results. In particular, we use Corollary \ref{corollary_probability_of_error} and Lemma \ref{ref_sequential_to_fixed} to derive the bound for $\mathbb{P}(\mathcal{I}|s)=0$. Next, using the result for $\mathbb{P}(\mathcal{I}|s)=0$ and Lemma \ref{lemma_3}, the final result is obtained. We begin by deriving Corollary \ref{corollary_probability_of_error}, which is a specific case of the following lemma.

\begin{lemma}\label{first_lemma}
    Let $ \mathbb{P}_{\mathrm{succ}}(\{\sigma_i\}_{i=1}^N) $ be the optimal probability of success in discriminating the states $\{\sigma_i\}_{i=1}^N$ given a single copy with prior probability of $ 1/N $ for all the states. Then, for any subset $ \mathcal{S} \subseteq [N] $,
    \begin{equation}
       \mathbb{P}_{\mathrm{succ}}( \{\sigma_i\}_{i=1}^N)\leq \frac{|\mathcal{S}|}{N}\mathbb{P}_{\mathrm{succ}}( \{\sigma_i\}_{i\in \mathcal{S}})+\frac{|\overline{\mathcal{S}}|}{N}\mathbb{P}_{\mathrm{succ}}(  \{\sigma_i\}_{i\in \overline{\mathcal{S}}})
    \end{equation}
    where $\overline{\mathcal{S}}=[N]\backslash \mathcal{S}$. 
\end{lemma}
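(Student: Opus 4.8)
The plan is to exploit the variational (optimization) characterization of $\mathbb{P}_{\mathrm{succ}}$ and show that an optimal measurement for the full $N$-state problem, once restricted to the indices in $\mathcal{S}$ and in $\overline{\mathcal{S}}$, yields \emph{feasible} (though generally suboptimal) strategies for the two smaller discrimination problems. Concretely, let $\{E_i\}_{i=1}^N$ be an optimal measurement attaining $\mathbb{P}_{\mathrm{succ}}(\{\sigma_i\}_{i=1}^N)$, so that with the uniform prior
\begin{equation}
    \mathbb{P}_{\mathrm{succ}}(\{\sigma_i\}_{i=1}^N)=\frac{1}{N}\sum_{i=1}^{N}\mathrm{Tr}(E_i\sigma_i)=\frac{1}{N}\sum_{i\in\mathcal{S}}\mathrm{Tr}(E_i\sigma_i)+\frac{1}{N}\sum_{i\in\overline{\mathcal{S}}}\mathrm{Tr}(E_i\sigma_i).
\end{equation}
The strategy is to bound each of the two partial sums by the corresponding sub-problem optimum and then recognize the weights $|\mathcal{S}|/N$ and $|\overline{\mathcal{S}}|/N$.

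First I would handle the block indexed by $\mathcal{S}$. The operators $\{E_i\}_{i\in\mathcal{S}}$ are positive semidefinite but do not resolve the identity, so they must be completed into a bona fide measurement on the sub-problem. I would do this by absorbing the leftover $\sum_{j\in\overline{\mathcal{S}}}E_j$ into a single outcome: either by merging it into one fixed guess $i_0\in\mathcal{S}$, setting $F_{i_0}=E_{i_0}+\sum_{j\in\overline{\mathcal{S}}}E_j$ and $F_i=E_i$ for $i\in\mathcal{S}\setminus\{i_0\}$, or (in the unambiguous setting) into the inconclusive outcome. The resulting $\{F_i\}_{i\in\mathcal{S}}$ is positive and satisfies $\sum_{i\in\mathcal{S}}F_i=\mathbb{1}$; moreover, since only a positive operator is added, $\mathrm{Tr}(F_i\sigma_i)\geq\mathrm{Tr}(E_i\sigma_i)$ for all $i\in\mathcal{S}$. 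Hence this feasible measurement certifies
\begin{equation}
    \mathbb{P}_{\mathrm{succ}}(\{\sigma_i\}_{i\in\mathcal{S}})\geq\frac{1}{|\mathcal{S}|}\sum_{i\in\mathcal{S}}\mathrm{Tr}(F_i\sigma_i)\geq\frac{1}{|\mathcal{S}|}\sum_{i\in\mathcal{S}}\mathrm{Tr}(E_i\sigma_i).
\end{equation}
Repeating the identical argument on $\overline{\mathcal{S}}$ gives the analogous bound for the complementary block.

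Combining the two bounds with the key bookkeeping identity $\frac{1}{N}=\frac{|\mathcal{S}|}{N}\cdot\frac{1}{|\mathcal{S}|}$ (and its complement version) converts each partial sum in the decomposition into $\frac{|\mathcal{S}|}{N}\mathbb{P}_{\mathrm{succ}}(\{\sigma_i\}_{i\in\mathcal{S}})$ and $\frac{|\overline{\mathcal{S}}|}{N}\mathbb{P}_{\mathrm{succ}}(\{\sigma_i\}_{i\in\overline{\mathcal{S}}})$ respectively, yielding the claimed inequality. The argument is short, so the only real subtlety to get right is the feasibility step: verifying that the restricted operators can be completed to a genuine measurement (positivity and resolution of the identity), and, if $\mathbb{P}_{\mathrm{succ}}$ refers to unambiguous discrimination, confirming that merging the complementary outcomes does not violate the zero-error constraint. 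This last point holds because the discarded outcomes $\{E_j\}_{j\in\overline{\mathcal{S}}}$ already satisfy $\mathrm{Tr}(E_j\sigma_k)=0$ for every $k\neq j$, so absorbing them never introduces cross terms among the surviving states in $\mathcal{S}$.
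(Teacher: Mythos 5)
Your proof is correct and takes essentially the same route as the paper's: the paper writes $\mathbb{P}_{\mathrm{succ}}$ as a nested supremum over POVMs, relaxes the coupling constraint $\Theta_1+\Theta_2=I$, and then notes that the supremum over sub-POVMs with $\sum_i \Lambda_i \preceq I$ equals that over genuine POVMs by absorbing the slack into one outcome --- which is precisely your completion step $F_{i_0}=E_{i_0}+\sum_{j\in\overline{\mathcal{S}}}E_j$, phrased constructively instead of as a constraint relaxation. Two cosmetic remarks: you assume the optimum is attained (fine in finite dimension by compactness; otherwise take an $\epsilon$-optimal POVM and let $\epsilon\to 0$), and your closing hedge about unambiguous discrimination is unnecessary, since $\mathbb{P}_{\mathrm{succ}}$ in this lemma refers to ordinary minimum-error discrimination.
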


\begin{proof}
    For this proof, we need to express the optimal probability of success as the probability of success given by the optimal positive operator-valued measure (POVM). That is,
    \begin{align}
        \mathbb{P}_{\mathrm{succ}}(\{\sigma_i\}_{i=1}^N)&=\frac{1}{N} \sup_{ \substack{\{\Lambda_i\succeq 0\}_{i=1}^N\\\mathrm{s.t.} \sum_{i=1}^N \Lambda_i=I}} \sum_{i=1}^N \Tr(\Lambda_i \sigma_i)\nonumber \\ &= \sup_{\substack{0\preceq\Theta_1,\Theta_2\preceq I\\ \Theta_1+\Theta_2=I }}\left \{\frac{1}{N} \sup_{ \substack{\{\Lambda_i\succeq 0\}_{i\in \mathcal{S}}\\\mathrm{s.t.} \sum_{i\in \mathcal{S}} \Lambda_i=\Theta_1}} \sum_{i\in \mathcal{S}}\Tr(\Lambda_i \sigma_i)+\frac{1}{N} \sup_{ \substack{\{\Lambda_i\succeq 0\}_{i\in \overline{\mathcal{S}}}\\\mathrm{s.t.} \sum_{i\in \overline{\mathcal{S}}} \Lambda_i=\Theta_2}} \sum_{i\in \overline{\mathcal{S}}} \Tr(\Lambda_i \sigma_i) \right \} \nonumber 
    \end{align}
    \begin{align}    
        \hspace{0.5cm} & \leq \frac{1}{N}\sup_{ \substack{\{\Lambda_i\succeq 0\}_{i\in \mathcal{S}}\\\mathrm{s.t.} \sum_{i\in \mathcal{S}} \Lambda_i\preceq I}}\sum_{i\in \mathcal{S}} \Tr(\Lambda_i \sigma_i)+\frac{1}{N}\sup_{ \substack{\{\Lambda_i\succeq 0\}_{i\in \overline{\mathcal{S}}}\\\mathrm{s.t.} \sum_{i\in \overline{\mathcal{S}}} \Lambda_i\preceq I}} \sum_{i\in \overline{\mathcal{S}}} \Tr(\Lambda_i \sigma_i) \nonumber \\ &= \frac{1}{N}\sup_{ \substack{\{\Lambda_i\succeq 0\}_{i\in \mathcal{S}}\\\mathrm{s.t.} \sum_{i\in \mathcal{S}} \Lambda_i= I}}\sum_{i\in \mathcal{S}} \Tr(\Lambda_i \sigma_i)+\frac{1}{N}\sup_{ \substack{\{\Lambda_i\succeq 0\}_{i\in \overline{\mathcal{S}}}\\\mathrm{s.t.} \sum_{i\in \overline{\mathcal{S}}} \Lambda_i=I}} \sum_{i\in \overline{\mathcal{S}}} \Tr(\Lambda_i \sigma_i) \nonumber \\ &=\frac{|\mathcal{S}|}{N}\mathbb{P}_{\mathrm{succ}}( \{\sigma_i\}_{i\in \mathcal{S}})+\frac{|\overline{\mathcal{S}}|}{N}\mathbb{P}_{\mathrm{succ}}(  \{\sigma_i\}_{i\in \overline{\mathcal{S}}})
    \end{align}
    where the notation $A\succeq 0$ means that $A$ is positive semi-definite.  The first equality uses that $0\preceq\sum_{i\in S} \Lambda_i 	\preceq I$, and $0\preceq\sum_{i\in \overline{S}} \Lambda_i 	\preceq I$, for any POVM, and in particular, for the optimal one. The inequality follows from removing the constraint $\Theta_1+\Theta_2=I$, meaning we allow a broader set of matrices $\Lambda_i$. Consequently, this modification can only increase the value or leave it unchanged. Finally, the third equality follows from the fact that the supremum is always achieved when $\sum_{i \in S} \Lambda_i = I$.
\end{proof}
From this result, the corollary used in the proof of Theorem \ref{theorem_general_lower_bound} follows.

\begin{corollary}\label{corollary_probability_of_error}

Let $\mathbb{P}_{\mathrm{error}}(\{\sigma_i\}_{i=1}^N)=1-\mathbb{P}_{\mathrm{succ}}(\{\sigma_i\}_{i=1}^N)$ be the optimal probability of error in discriminating the states $\{\sigma_i\}_{i=1}^N$ given a single copy with prior probability of $1/N$ for all the states. Then,
    \begin{equation}\label{eq_probability_of_error}
        \mathbb{P}_{\mathrm{error}}(\{\sigma_i\}_{i=1}^N)\geq \frac{1}{N} \left(1-\frac{1}{2} \min_{i\neq j}\|\sigma_i-\sigma_j\|_1\right)
    \end{equation}
    implying that for pure states, 
    \begin{equation}
        \mathbb{P}_{\mathrm{error}}(\{\sigma_i\}_{i=1}^N)\geq \frac{1}{N} \left(1-\sqrt{1- \max_{i\neq j} |\bra{\psi_i}\ket{\psi_j}|^2} \right)
    \end{equation}

\end{corollary}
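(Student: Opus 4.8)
The plan is to apply Lemma \ref{first_lemma} to a carefully chosen two-element subset and then discard all information about the remaining states through the trivial bound that a success probability is at most one. Since Lemma \ref{first_lemma} is an upper bound on $\mathbb{P}_{\mathrm{succ}}$, and $\mathbb{P}_{\mathrm{error}}=1-\mathbb{P}_{\mathrm{succ}}$, this translates directly into the desired lower bound on the error.

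First I would select the pair $i\neq j$ that minimizes the trace distance $\|\sigma_i-\sigma_j\|_1$ and set $\mathcal{S}=\{i,j\}$, so that $|\mathcal{S}|=2$ and $|\overline{\mathcal{S}}|=N-2$. Lemma \ref{first_lemma} then gives
\begin{equation}
\mathbb{P}_{\mathrm{succ}}(\{\sigma_i\}_{i=1}^N)\leq \frac{2}{N}\,\mathbb{P}_{\mathrm{succ}}(\{\sigma_i,\sigma_j\})+\frac{N-2}{N}\,\mathbb{P}_{\mathrm{succ}}(\{\sigma_k\}_{k\in\overline{\mathcal{S}}}).
\end{equation}
The second step is to bound the complement term by $\mathbb{P}_{\mathrm{succ}}(\{\sigma_k\}_{k\in\overline{\mathcal{S}}})\leq 1$ and to evaluate the two-state term using the Helstrom formula with equal priors $1/2$, namely $\mathbb{P}_{\mathrm{succ}}(\{\sigma_i,\sigma_j\})=\tfrac{1}{2}\bigl(1+\tfrac{1}{2}\|\sigma_i-\sigma_j\|_1\bigr)$. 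Substituting and simplifying yields $\mathbb{P}_{\mathrm{succ}}(\{\sigma_i\}_{i=1}^N)\leq \tfrac{N-1}{N}+\tfrac{1}{2N}\|\sigma_i-\sigma_j\|_1$, and taking the complement, together with the fact that the chosen pair attains $\min_{i\neq j}\|\sigma_i-\sigma_j\|_1$, establishes the first inequality.

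For the pure-state statement I would invoke the standard identity $\|\sigma_i-\sigma_j\|_1=2\sqrt{1-|\bra{\psi_i}\ket{\psi_j}|^2}$, so that $\tfrac{1}{2}\min_{i\neq j}\|\sigma_i-\sigma_j\|_1=\sqrt{1-\max_{i\neq j}|\bra{\psi_i}\ket{\psi_j}|^2}$; inserting this into the first bound gives the second. The only nonroutine decision is the choice to isolate exactly the closest pair and to throw away all structure of the remaining $N-2$ states via the bound $\leq 1$. One might worry this is too lossy, but it is precisely what collapses the recursive inequality of Lemma \ref{first_lemma} onto the tight two-state Helstrom value, and since the minimum-overlap (equivalently, minimum trace-distance) pair dominates the error, no sharper treatment of the complement is needed. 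The remaining ingredients—the two-state Helstrom success probability and the pure-state trace-distance/overlap identity—are entirely standard.
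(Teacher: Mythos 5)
Your proof is correct and takes essentially the same route as the paper's: apply Lemma \ref{first_lemma} with $\mathcal{S}$ equal to the minimum-trace-distance pair, bound the complement's success probability by $1$, insert the two-state Helstrom value $\tfrac{1}{2}\bigl(1+\tfrac{1}{2}\|\sigma_i-\sigma_j\|_1\bigr)$, and pass to overlaps via $\|\sigma_i-\sigma_j\|_1=2\sqrt{1-|\bra{\psi_i}\ket{\psi_j}|^2}$. The only blemish is in your closing commentary, where you call the chosen pair the ``minimum-overlap'' pair: the minimum-trace-distance pair is the \emph{maximum}-overlap pair, which your displayed equations already handle correctly.
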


\begin{proof}

    The proof follows trivially by applying Lemma \ref{first_lemma} and performing some algebraic manipulations.
    \begin{align}
        \mathbb{P}_{\mathrm{error}}(\{\sigma_i\}_{i=1}^N)&=1-\mathbb{P}_{\mathrm{succ}}(\{\sigma_i\}_{i=1}^N) \nonumber \\ &\geq 1- \frac{2}{N} \, \mathbb{P}_{\mathrm{succ}}(\{\sigma_i\}_{i\in \{p,q\}})-\frac{N-2}{N} \nonumber \\ &=\frac{2}{N} \left( 1- \mathbb{P}_{\mathrm{succ}}(\{\sigma_i\}_{i\in \{p,q\}})  \right) \nonumber  \\ &= \frac{2}{N} \left( 1- \frac{1}{2}- \frac{1}{4}  \|\sigma_p-\sigma_q\|_1 \right) \nonumber \\ & = \frac{1}{N} \left( 1- \frac{1}{2}\|\sigma_p-\sigma_q\|_1 \right) 
    \end{align}
    where the inequality follows from Lemma \ref{first_lemma}, with $\mathcal{S} = \{p, q\}$, and from the fact that $\mathbb{P}_{\mathrm{succ}}(\{\sigma_i\}_{i \in \overline{\mathcal{S}}})$ is upper bounded by $1$. The final result is obtained by taking $\{p,q\}$ such that $\|\sigma_p-\sigma_q\|_1=\min_{i\neq j}\|\sigma_i-\sigma_j\|_1$.
\end{proof}

Next, we state the second auxiliary result in Lemma \ref{ref_sequential_to_fixed}.

\begin{lemma}\label{ref_sequential_to_fixed}
    For any sequential method that unambiguously distinguishes a set of pure states $ \{ \ket{\psi_i} \}_{i=1}^N $ with $\mathbb{P}(\mathcal{I})=0$ using a maximum expected number of measurements $\max_{s\in [N]} \mathbb{E}[L|s] $, there exists at least one fixed-length method that uses $ AK  \left \lceil \max_{s\in[N]} \mathbb{E}[L|s]\right \rceil $ samples and distinguishes the states with a probability of error at most $ 1/K^A $, where $A,K\in \mathbb{N}$.
\end{lemma}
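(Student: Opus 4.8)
The plan is to obtain the fixed-length method by \emph{truncating} the given sequential method and then \emph{amplifying} its success probability through independent repetition. Set $T := \lceil \max_{s\in[N]} \mathbb{E}[L|s]\rceil$. Because $\mathbb{P}(\mathcal{I})=0$ and the method is unambiguous, conditioned on $S=s$ it halts almost surely with the correct decision $D=s$, consuming a random but finite number of copies $L$. The first step is to control how often a single run exceeds a budget of $KT$ copies. By Markov's inequality, for every $s$,
\[
  \mathbb{P}(L > KT \,|\, s) \;\leq\; \frac{\mathbb{E}[L|s]}{KT} \;\leq\; \frac{T}{KT} \;=\; \frac{1}{K},
\]
where the middle inequality uses $\mathbb{E}[L|s]\leq \max_{s}\mathbb{E}[L|s]\leq T$. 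Thus, if we run the method but abort the moment its cumulative copy count would exceed $KT$, the run fails to deliver a (necessarily correct) answer with probability at most $1/K$, uniformly in $s$.

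Second, I would amplify by parallel repetition. The constructed fixed-length method requests $AKT$ copies at the outset and partitions them into $A$ disjoint blocks of $KT$ copies each. On block $j$ it executes the sequential method, truncating at the block boundary as above; this makes the scheme manifestly fixed-length with exactly $AK\lceil\max_{s}\mathbb{E}[L|s]\rceil$ samples. If at least one block yields a conclusive decision, the method outputs that decision (any such decision is guaranteed correct by unambiguity, so conflicting outputs cannot arise); if every block aborts, it outputs an arbitrary guess.

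Third is the error analysis. Conditioned on $S=s$, the $A$ blocks act on disjoint, identically prepared copies and carry their own post-measurement registers $\tau_{i-1}$, so the runs are mutually independent; each aborts with probability at most $1/K$ by the first step. Hence all $A$ runs abort with probability at most $(1/K)^A$. Since the method errs only in this all-abort event, $\mathbb{P}(\mathcal{E}|s)\leq 1/K^{A}$ for every $s$, and averaging over the priors gives $\mathbb{P}(\mathcal{E})\leq 1/K^{A}$, as claimed.

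I expect the main obstacle to be the clean justification of two bookkeeping points rather than any deep estimate: (i) that truncating at the block boundary is compatible with the adaptive, possibly non-collapsing measurements of Algorithm~\ref{alg_1}, which holds because aborting simply discards the unfinished block's registers; and (ii) that the $A$ runs are genuinely independent given $S=s$, which follows from feeding each block a fresh set of copies and never sharing post-measurement states across blocks. Once these are pinned down, Markov's inequality together with independence closes the argument.
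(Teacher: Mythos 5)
Your proposal is correct and follows essentially the same route as the paper's proof: both truncate the sequential method via Markov's inequality to bound the per-block abort probability by $1/K$, partition the $AK\left\lceil \max_{s\in[N]}\mathbb{E}[L|s]\right\rceil$ copies into $A$ independent blocks, output any completed block's (necessarily correct) decision, and bound the error by the all-abort probability $1/K^A$. The only cosmetic difference is that you apply Markov directly to the budget $KT$ while the paper applies it to $K\,\mathbb{E}[L|s]$ and then enlarges the threshold, which is the same estimate.
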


\begin{proof}
    See Appendix \ref{Appendix_B}.
\end{proof}

Now that Corollary \ref{corollary_probability_of_error} and Lemma \ref{ref_sequential_to_fixed} have been presented, we can prove  Theorem \ref{theorem_general_lower_bound} for $\mathbb{P}(\mathcal{I}|s)=0$. To do so, we use the fact that the specific fixed-length method used in Lemma \ref{ref_sequential_to_fixed} satisfies that its error probability is lower bounded by that of the optimal fixed-length method. Additionally, using Corollary \ref{corollary_probability_of_error}, we obtain the following result.
\begin{align}\label{expression_p_error}
    \frac{1}{K^A} &\geq  \mathbb{P}_{\mathrm{error}}\left(\left\{\ket{\psi_i}\bra{\psi_i}^{\otimes AK \left \lceil \max_{s\in[N]} \mathbb{E}[L|s]\right \rceil} \right\}_{i=1}^N \right) \nonumber \\&\geq \frac{1}{N} \left(1-\sqrt{1- \max_{i\neq j} |\bra{\psi_i}\ket{\psi_j}|^{2AK\left \lceil \max_{s\in[N]} \mathbb{E}[L|s]\right \rceil}} \right)
\end{align}
for any $A\geq 1$. The power $2AK \left \lceil \max_{s\in[N]} \mathbb{E}[L|s]\right \rceil$ appears due to the property $\bra{\psi_i}^{\otimes u} \ket{\psi_j}^{\otimes u} = \bra{\psi_i} \ket{\psi_j}^u$, as we apply Corollary \ref{corollary_probability_of_error} to states $\left\{\ket{\psi_i}^{\otimes AK\left \lceil \max_{s\in[N]} \mathbb{E}[L|s]\right \rceil} \right\}_{i=1}^N$. Therefore, 
\begin{equation}
    1-\frac{N}{K^A} \leq  \sqrt{1- \max_{i\neq j} |\bra{\psi_i}\ket{\psi_j}|^{2AK\left \lceil \max_{s\in[N]} \mathbb{E}[L|s]\right \rceil}}
\end{equation}
which implies that for any $A$ such that $(1 - N/K^A) \geq  0$, i.e., $A\geq \log_K N$,
\begin{equation}
    1-\left(1-\frac{N}{K^A} \right)^2 \geq  \max_{i\neq j} |\bra{\psi_i}\ket{\psi_j}|^{2AK\left \lceil \max_{s\in[N]} \mathbb{E}[L|s]\right \rceil} 
\end{equation}
Next, using that $\log \max_{i\neq j} |\bra{\psi_i}\ket{\psi_j}|^{2}<0$,
\begin{align}
  \left \lceil \max_{s\in[N]} \mathbb{E}[L|s]\right \rceil &\geq  \frac{\log\left(1-\left(1-\frac{N}{K^A} \right)^2\right)}{AK\log \max_{i\neq j} |\bra{\psi_i}\ket{\psi_j}|^{2}} \nonumber \\ & = \frac{1}{-\log \max_{i\neq j} |\bra{\psi_i}\ket{\psi_j}|^{2}}\left(\frac{\log K}{K}\right)\left( \frac{\delta}{(1+\delta) }-\frac{\log \left(2-\frac{1}{N^{\delta}}\right)}{(1+\delta) \log N}\right) \nonumber\\ &\geq \frac{1}{-\log \max_{i\neq j} |\bra{\psi_i}\ket{\psi_j}|^{2}} \left(\frac{\log K}{K}\right) \left( \frac{\delta-1}{1+\delta  }\right)
\end{align}
where the equality follows by substituting $A=\log_K\left( N^{1+\delta}\right)$, being $\delta \in \mathbb{R}^+$. The second inequality uses that $N\geq 2$ and, therefore, $\log(2-\frac{1}{N^\delta})/\log N\leq 1$. Finally, by taking the supremum over $\delta \in \mathbb{R}^+$ and $K\in \mathbb{N}$,
\begin{equation}\label{inequality_zero_inconclusive}
    \frac{1}{-\log \max_{i\neq j} |\bra{\psi_i}\ket{\psi_j}|^{2}}\left(\frac{\log 3}{3}\right) \leq \left \lceil \max_{s\in[N]} \mathbb{E}[L|s]\right \rceil
\end{equation}
for any sequential method that discriminates unambiguously a set of quantum states $\{\ket{\psi_i}\}_{i=1}^N$ with inconclusive probability $\mathbb{P}(\mathcal{I})$=0.  Finally, we use the following lemma to conclude the proof.
\begin{lemma}\label{lemma_3}
    For any sequential method that unambiguously discriminates a set of states $\{\ket{\psi_i}\}_{i=1}^N$ with an inconclusive probability $\mathbb{P}(\mathcal{I}|s)<1 $ and an expected number of copies $\mathbb{E}[L|s]$, there exists another sequential method that unambiguously discriminates the same set of states $\{\ket{\psi_i}\}_{i=1}^N$ with an inconclusive probability $\mathbb{P}(\mathcal{I}|s) = 0$ and an expected number of copies equal to $\mathbb{E}[L|s]/(1-\mathbb{P}(\mathcal{I}|s))$.
\end{lemma}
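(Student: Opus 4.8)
The plan is to build the new method by \emph{repetition}. Run the given method $\mathcal{M}$ on fresh copies of the unknown state; if it returns the inconclusive outcome $N+1$, discard all the copies and post-measurement states used in that attempt and run $\mathcal{M}$ again from scratch, continuing until $\mathcal{M}$ returns a conclusive decision $d\in[N]$, which the new method $\mathcal{M}'$ then outputs. Because each restart only requests new copies from the query function and forgets the previous post-measurement states $\tau_{i-1}$, this concatenation is itself of the form of Algorithm~\ref{alg_1}: each block is an instance of $\mathcal{M}$, and the decision to continue or stop is a function of the data collected so far, namely whether the current block was inconclusive.

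Correctness is then immediate. Conditioned on $S=s$, write $p_s:=\mathbb{P}(\mathcal{I}|s)<1$. Since $\mathcal{M}$ is unambiguous, every conclusive decision it produces equals $s$, so any decision output by $\mathcal{M}'$ is correct and $\mathbb{P}(D\neq S)=0$. The attempts are independent and each is inconclusive with probability $p_s$, so $\mathcal{M}'$ fails to terminate only on the event that every attempt is inconclusive, which has probability $\lim_{k\to\infty}p_s^{\,k}=0$. Hence $\mathcal{M}'$ terminates almost surely with a correct answer and has inconclusive probability $\mathbb{P}(\mathcal{I}|s)=0$.

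It remains to compute the expected number of copies, which is a Wald-type argument carried out elementarily. Let $R$ be the random number of attempts, and let $X_1,X_2,\dots$ be the numbers of copies consumed in successive attempts; conditioned on $S=s$ these are i.i.d.\ with mean $\mathbb{E}[L|s]$, and $\mathbb{P}(R\geq k\mid s)=p_s^{\,k-1}$. The total copy count is $T=\sum_{k\geq 1}X_k\,\mathbbm{1}[R\geq k]$, and the key observation is that $\{R\geq k\}$ depends only on the outcomes of attempts $1,\dots,k-1$ and is therefore independent of $X_k$. Exchanging sum and expectation (justified by nonnegativity),
\begin{equation}
\mathbb{E}[T\mid s]=\sum_{k\geq 1}\mathbb{E}[X_k\mid s]\,\mathbb{P}(R\geq k\mid s)=\mathbb{E}[L|s]\sum_{k\geq 1}p_s^{\,k-1}=\frac{\mathbb{E}[L|s]}{1-p_s},
\end{equation}
which is exactly the claimed value $\mathbb{E}[L|s]/(1-\mathbb{P}(\mathcal{I}|s))$.

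The only delicate step is the independence of successive attempts, i.e.\ justifying that each restart behaves like a fresh, independent copy of $\mathcal{M}$. This rests on the modeling assumption implicit in Algorithm~\ref{alg_1} that the query function supplies genuinely fresh copies of $\ket{\psi}$ and that discarding the previous post-measurement states removes any correlation between blocks. Granting this, the i.i.d.\ structure of the copy counts $\{X_k\}$ and of the inconclusive indicators follows, and both the termination argument and the expectation computation above go through (including the degenerate case $\mathbb{E}[L|s]=\infty$, where the nonnegative-term interchange still yields equality).
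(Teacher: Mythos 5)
Your proposal is correct and takes essentially the same route as the paper: both build the new method by rerunning the given one on fresh copies until a conclusive answer appears, observe that the number of attempts is geometric with parameter $1-\mathbb{P}(\mathcal{I}|s)$, and sum a geometric series. The only difference is bookkeeping in the expectation step: you use a Wald-type identity $\mathbb{E}[T\mid s]=\sum_{k\geq 1}\mathbb{E}[X_k\mid s]\,\mathbb{P}(R\geq k\mid s)$, whereas the paper conditions on $R=r$ and splits each run's cost into $\mathbb{E}[L|D=N+1,s]$ and $\mathbb{E}[L|D<N+1,s]$ before recombining; your version is slightly cleaner since it never needs these conditional expectations.
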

\begin{proof}
    See Appendix \ref{appendix_A}.
\end{proof}
That is, we can transform a sequential method that unambiguously discriminates a set of states with a strictly positive inconclusive probability into one with zero inconclusive probability by increasing the expected number of samples by a factor of $(1-\mathbb{P}(\mathcal{I}|s))^{-1}$. The theorem follows by applying inequality \eqref{inequality_zero_inconclusive} to this second sequential method.

Finally, we conclude this section by discussing the difficulties involved in improving the lower bound using this proof technique. In particular, small improvements to the one-shot lower bound given in Corollary \ref{corollary_probability_of_error}, such as  
\begin{equation}
    \mathbb{P}_{\mathrm{error}}(\{\sigma_i\}_{i=1}^N)
    \geq 
    \frac{1}{g(N)} \left(1 - \frac{1}{2} \min_{i \neq j} \|\sigma_i - \sigma_j\|_1 \right),
\end{equation}
where $g(N) < N$ is some increasing function, would unfortunately yield the same lower bound when taking 
$A = \log_K g(N)^{1+\delta}$ and following the same procedure. Therefore, improving Theorem \ref{theorem_general_lower_bound} would likely require exploring alternative proof techniques or refining Lemma \ref{ref_sequential_to_fixed}.

\subsection{Lower bound in quantum hypothesis testing}

Interestingly, the methodology used in the previous section can be applied to the case where a sequential method is used for the standard hypothesis testing task, i.e., there is no inconclusive answer, and the probability of error may be nonzero.
\begin{theorem}\label{theorem_probability_of_error}
    For any sequential method that discriminates the set of pure states $\{\ket{\psi_i}\}_{i=1}^N$ with a probability of error satisfying $\max_{s\in [N]}\mathbb{P}(\mathcal{E}|s) \leq \frac{3}{7}$, then
    \begin{equation}\label{probability_of_error_theorem}
        \frac{1}{-7\log \max_{i,j: i\neq j} |\bra{\psi_i}\ket{\psi_j}|^{2}} \left( \frac{1}{2}- \frac{7\max_{s\in [N]}\mathbb{P}(\mathcal{E}|s) }{6}\right)^2 \leq \left \lceil \max_{s\in[N]} \mathbb{E}[L|s]\right \rceil
    \end{equation}
\end{theorem}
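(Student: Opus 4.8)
The plan is to mirror the proof of Theorem \ref{theorem_general_lower_bound}, the only essential new ingredient being that the sequential method now commits genuine errors, so the conversion to a fixed-length scheme can no longer exploit the fact that every conclusive answer is correct. Write $p := \max_{s\in[N]}\mathbb{P}(\mathcal{E}|s)$ and $c := \max_{i,j:i\neq j}|\bra{\psi_i}\ket{\psi_j}|$. First I would build, from the given sequential method, a fixed-length method that discriminates the states with controllably small error; then I would bound its error from below through Corollary \ref{corollary_probability_of_error}; and finally I would optimize the free parameters exactly as in the proof of Theorem \ref{theorem_general_lower_bound}.

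For the conversion I would adapt the construction underlying Lemma \ref{ref_sequential_to_fixed}. Cap a single run of the sequential method at $T_1 = K\lceil\max_{s}\mathbb{E}[L|s]\rceil$ copies; by Markov's inequality each run fails to terminate (``aborts'') with probability at most $1/K$, uniformly in $s$. Conditioned on termination within the cap, the capped decision coincides with the decision of the original method, so the conditional error per run is at most $\tfrac{p}{1-1/K}$. I would then run $A$ independent capped copies and output the plurality decision among the runs that terminate. A Chernoff/Hoeffding bound for the event that the true label fails to be the plurality ---which is implied by the incorrect runs being at least as numerous as the correct ones--- gives a voted error decaying like $\exp\!\big(-\,\Theta(A)\,(\tfrac12 - \tfrac{p}{1-1/K})^2\big)$, valid precisely when $\tfrac{p}{1-1/K} < \tfrac12$. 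The total number of copies consumed is $M = AK\lceil\max_{s}\mathbb{E}[L|s]\rceil$, and the bound holds uniformly in $s$, hence also for the uniform-prior average error.

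Next I would confront this fixed-length method with the one-shot lower bound. Since the constructed method uses $M$ copies of each state, its average error is at least that of the optimal fixed-length discriminator, which by Corollary \ref{corollary_probability_of_error} applied to $\{\ket{\psi_i}^{\otimes M}\}_{i=1}^N$ is at least $\tfrac1N\big(1-\sqrt{1-c^{2M}}\big)$. Choosing $A$ proportional to $(1+\delta)\log N/(\tfrac12-\tfrac{p}{1-1/K})^2$ forces $N\cdot(\text{voted error})\le N^{-\delta}$, which kills the spurious $N$-dependence exactly as the choice $A=\log_K N^{1+\delta}$ did in Theorem \ref{theorem_general_lower_bound}. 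Rearranging $N^{-\delta}\ge 1-\sqrt{1-c^{2M}}$ gives a lower bound on $M$, hence on $\lceil\max_{s}\mathbb{E}[L|s]\rceil = M/(AK)$; letting $\delta\to\infty$ sends the resulting factor $\tfrac{\delta-1}{1+\delta}\to1$, and fixing $K=7$ turns $\tfrac{p}{1-1/K}$ into $\tfrac{7p}{6}$ and produces the coefficient $\tfrac17$. The hypothesis $p\le\tfrac37$ is exactly the requirement $\tfrac12 - \tfrac{7p}{6}\ge0$, which is what makes the plurality-vote margin positive and the right-hand side nonnegative.

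I expect the main obstacle to be the rigorous control of the plurality vote in the presence of aborting runs: one must verify that conditioning on termination preserves independence across runs, that the per-run conditional error bound $\tfrac{p}{1-1/K}$ feeds correctly into the concentration inequality, and ---most delicately--- that the bound can be packaged so that the exponent is exactly $A\,(\tfrac12-\tfrac{p}{1-1/K})^2$ rather than a constant multiple of it, since the displayed constants ($7$, $\tfrac76$, $\tfrac37$) hinge on this normalization. A secondary, more routine, annoyance is keeping $A$ integral while preserving the clean $N^{-\delta}$ scaling and the $N\ge2$ estimate $\log(2-N^{-\delta})/\log N\le1$ inherited from the proof of Theorem \ref{theorem_general_lower_bound}.
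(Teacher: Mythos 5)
Your proposal is correct and follows essentially the same route as the paper: it reconstructs Lemma \ref{ref_sequential_to_fixed_probability_of_error} (capped runs via Markov's inequality, conditional per-run error $\tfrac{Kp}{K-1}$, majority vote among terminated segments controlled by Hoeffding), confronts the resulting fixed-length method with Corollary \ref{corollary_probability_of_error}, and makes the same parameter choices ($A\propto\log(2N^{1+\delta})/(\tfrac12-\tfrac{Kp}{K-1})^2$, $K=7$, $\delta\to\infty$). The normalization issue you flag as the main obstacle is resolved in the paper exactly as you would need: splitting on whether at least $A/2$ segments terminate, the vote term contributes $e^{-A(\frac12-\frac{Kp}{K-1})^2}$ and the non-termination term $e^{-2A(\frac12-\frac1K)^2}$, and the condition $K\geq 7$ makes the latter dominated, yielding the clean bound $2e^{-A(\frac12-\frac{Kp}{K-1})^2}$ with the exponent normalized as required.
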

As in Theorem \ref{theorem_general_lower_bound}, we observe that the lower bound depends inversely on the minimum pairwise Chernoff distance of the different states. Additionally, similar to the dependency of the bound in Theorem \ref{theorem_general_lower_bound} on the inconclusive probability, here, as the probability of error increases, the bound decreases. This theorem complements the results in \cite{martinez2021quantum}, which show that any sequential method solving a binary hypothesis testing problem for mixed states $\{\sigma_1, \sigma_2\}$ satisfies
\begin{equation}
    \max_{s\in [2]}\mathbb{E}[L|s]\geq \frac{-\log \mathbb{P}(\mathcal{E})(1-\mathbb{P}(\mathcal{E}))}{\min\{D(\sigma_1,\sigma_2),D(\sigma_2,\sigma_1)\}}+O(1)
\end{equation}
However, this expression does not hold when $\min\{D(\rho_1,\rho_2),D(\rho_2,\rho_1)\}$ is not finite, as is the case for pure states.

Note that the bounds given in \eqref{theorem_lower_bound_inconclusive_prop} and \eqref{probability_of_error_theorem} do not coincide when $\mathbb{P}(\mathcal{E}) = \mathbb{P}(\mathcal{I}) = 0$, as they are derived through slightly different procedures. Specifically, the proof of this theorem is analogous to that of Theorem \ref{theorem_general_lower_bound}, but it uses Lemma \ref{ref_sequential_to_fixed_probability_of_error} instead of Lemma \ref{ref_sequential_to_fixed}, which is stated below.
\begin{lemma}\label{ref_sequential_to_fixed_probability_of_error}
    For any sequential method that distinguishes a set of pure states $ \{ \ket{\psi_i} \}_{i=1}^N $ with a probability of error $\mathbb{P}(\mathcal{E}|s)<\frac{K-1}{2K}$ for all $s\in [N]$, using an expected number of measurements $\mathbb{E}[L|s] $, there exists at least one fixed-length method that uses $ AK \left \lceil \max_{s\in [N]}\mathbb{E}[L|s] \right \rceil$ samples and distinguishes the states with a probability of error at most 
    \begin{equation*}
        2e^{-A\left( \frac{1}{2}- \frac{K\max_{s\in [N]}\mathbb{P}(\mathcal{E}|s) }{K-1} \right)^2}
    \end{equation*}
    where $A,K\in \mathbb{N}$ and $K\geq 7$.
\end{lemma}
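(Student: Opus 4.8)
The plan is to mirror the construction behind Lemma~\ref{ref_sequential_to_fixed}, but to replace the ``take any successful run'' rule by a majority vote, since here the individual non-truncated runs may themselves err. Write $\bar L := \max_{s\in[N]}\mathbb{E}[L|s]$ and $p := \max_{s\in[N]}\mathbb{P}(\mathcal{E}|s)$. The fixed-length method runs $A$ independent copies of the given sequential method, each \emph{truncated} after $M := K\lceil \bar L\rceil$ requested copies, so that the total budget is $AK\lceil\bar L\rceil$ as required; its output is the plurality of the decisions returned by those runs that terminate within budget (if no run terminates it outputs an arbitrary decision, which we count as an error). Conditioned on $S=s$, the $A$ runs are independent, which is what makes the concentration arguments below available.

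First I would control a single truncated run. By Markov's inequality applied to $L$ conditioned on $S=s$, the truncation probability is $\beta_s := \mathbb{P}(L>M\mid s)\le \mathbb{E}[L|s]/M \le \bar L/(K\lceil\bar L\rceil)\le 1/K$. The key observation is that, \emph{conditioned} on a run terminating within budget, its error probability is at most $\mathbb{P}(\mathcal{E}|s)/(1-\beta_s)\le p/(1-1/K)=\tfrac{Kp}{K-1}=:q$, and the hypothesis $p<\tfrac{K-1}{2K}$ is exactly what guarantees $q<\tfrac12$. This conditional bound, rather than the cruder marginal bound in terms of $p+1/K$, is what produces the factor $\tfrac{Kp}{K-1}$ in the statement.

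Next I would decompose the failure event of the plurality rule. Letting $G$ denote the number of runs that terminate within budget, the vote can only be wrong if either fewer than half the runs terminate, or at least half of the terminating runs return a wrong decision. For the first event, $G$ is a sum of $A$ independent indicators with mean $\ge A(1-1/K)$, so Hoeffding's inequality gives $\mathbb{P}(G<A/2\mid s)\le e^{-2A(1/2-1/K)^2}$. For the second, I would condition on the realized set of terminating runs; given such a set of size $g\ge A/2$, the errors among its members are independent Bernoulli variables of parameter at most $q<\tfrac12$, whence Hoeffding yields $\mathbb{P}(\text{at least }g/2\text{ err})\le e^{-2g(1/2-q)^2}\le e^{-A(1/2-q)^2}$, and averaging over the terminating set preserves the bound $e^{-A(1/2-q)^2}$ since it is uniform over all admissible sets.

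Finally I would combine the two estimates by a union bound and match the constants. The total error is at most $e^{-2A(1/2-1/K)^2}+e^{-A(1/2-q)^2}$, and the desired form $2e^{-A(1/2-q)^2}$ follows once $2(1/2-1/K)^2\ge(1/2-q)^2$; since the right-hand side is largest at $q=0$, this reduces to $(1/2-1/K)^2\ge \tfrac18$, i.e.\ $K\ge 7$, which is precisely the hypothesis. The step I expect to be most delicate is the second Hoeffding bound, because the number of terminating runs $G$ is random: the argument must condition on the realized terminating set, so that the surviving errors are genuinely independent with parameter $\le q$, and then exploit $g\ge A/2$ inside the exponent before averaging. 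Checking that $K\ge 7$ is the exact threshold at which the truncation term is dominated by the voting term is the other point that requires care.
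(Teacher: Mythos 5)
Your proposal is correct and follows essentially the same route as the paper's own proof: the identical construction ($A$ independent runs truncated at $K\lceil\max_s\mathbb{E}[L|s]\rceil$ copies with a majority vote among the finished ones), the same Markov bound on truncation, the same conditional error bound $\frac{K}{K-1}\max_s\mathbb{P}(\mathcal{E}|s)$, the same two Hoeffding applications, and the same union bound with the $K\geq 7$ threshold making the truncation term dominated by the voting term. The only cosmetic difference is that you condition on the realized set of terminating runs rather than on their number $\tilde{A}$, which handles the same independence subtlety.
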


\begin{proof}
    See Appendix \ref{appendix_C}.
\end{proof}

As before, we use the fact that the specific fixed-length method employed in Lemma \ref{ref_sequential_to_fixed_probability_of_error} has an error probability that is lower bounded by that of the optimal fixed-length method. This, together with Corollary \ref{corollary_probability_of_error}, yields
\begin{align}
    2e^{-A\left( \frac{1}{2}- \frac{K\max_{s\in [N]}\mathbb{P}(\mathcal{E}|s) }{K-1} \right)^2} &\geq  \mathbb{P}_{\mathrm{error}}\left(\left\{\ket{\psi_i}\bra{\psi_i}^{\otimes AK \left \lceil \max_{s\in[N]} \mathbb{E}[L|s]\right \rceil} \right\}_{i=1}^N \right) \nonumber \\ &  \geq \frac{1}{N} \left(1-\sqrt{1- \max_{i\neq j} |\bra{\psi_i}\ket{\psi_j}|^{2AK\left \lceil \max_{s\in [N]}\mathbb{E}[L|s] \right \rceil}} \right)
\end{align}
for $K,A \in \mathbb{N}$, and $K\geq 7$. To obtain the result shown in Theorem \ref{theorem_probability_of_error}, we need to substitute 
\begin{equation}
    A=\frac{\log 2 N^{1+\delta}}{\left( \frac{1}{2}- \frac{K\max_{s\in [N]}\mathbb{P}(\mathcal{E}|s)}{K-1} \right)^2}
\end{equation}
into the previous expression. Consequently,
\begin{equation}
    \frac{1}{K\log \max_{i\neq j} |\bra{\psi_i}\ket{\psi_j}|^{2}} \frac{\log \left(1-\left(1-\frac{1}{N^\delta}\right)^2\right)}{\log(2N^{1+\delta})}\left( \frac{1}{2}- \frac{K\max_{s\in [N]}\mathbb{P}(\mathcal{E}|s)}{K-1} \right)^2 \leq \left \lceil \max_{s\in [N]}\mathbb{E}[L|s] \right \rceil
\end{equation}
Since the inequality holds for any integer $ K \geq 7$ and  $\delta \in \mathbb{R}^+ $, we choose $K = 7$ and let $\delta \to \infty$ as they maximize the lower bound. Next, for a fixed $N$,
\begin{equation}
    \lim_{\delta \to \infty} \frac{\log \left( 1 - \left( 1 - \frac{1}{N^\delta} \right)^2 \right)}{\log(2N^{1 + \delta})} = -1,
\end{equation}
which leads to the desired result.

\section{Upper bound in unambiguous discrimination}
\label{method_introduction}

In this section, we show an upper bound on the expected number of copies required to unambiguously discriminate a set of pure states of the optimal sequential method. The result is presented in the following theorem.

\begin{theorem}\label{theorem_2}
    For any $p\in [0,1]$, the optimal sequential method that unambiguously discriminates the set of states $\{\ket{\psi_i}\}_{i=1}^N$, such that $\max_{s\in [N]}\mathbb{P}(\mathcal{I}|s)\leq p$, satisfies the following inequality:
    \begin{equation}
        \max_{s\in[N]}\mathbb{E}[L|s]\leq \min_{\delta \in (0,1)} \left\{\frac{\left(1-p\right)\log(\frac{N-1}{\delta})}{-\log \max_{i\neq j} |\bra{\psi_i}\ket{\psi_j}|} \left(\frac{1+\delta}{1-\delta} \right) \right\}
    \end{equation}
\end{theorem}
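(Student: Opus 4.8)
The plan is to prove this as an achievability statement: since the theorem bounds the \emph{optimal} method, it suffices to exhibit one explicit non-adaptive sequential method whose worst-case expected number of copies matches the right-hand side. Writing $r := \max_{i\neq j}|\bra{\psi_i}\ket{\psi_j}|$, the method proceeds in identical rounds. In each round it requests a fresh batch of $L$ copies, applies a single fixed POVM $\{E_1,\dots,E_N,E_{\mathrm{inc}}\}$ to $\ket{\psi}\bra{\psi}^{\otimes L}$, stops and outputs $i$ if outcome $E_i$ occurs, and otherwise discards the batch and repeats. Because each $E_i$ is constructed to annihilate every $\ket{\psi_j}^{\otimes L}$ with $j\neq i$, any conclusive outcome is necessarily correct, so $\mathbb{P}(D\neq S)=0$ and the procedure is a legitimate unambiguous-discrimination method; moreover the measurement is the same in every round, so it is non-adaptive.

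The technical heart is a single-round guarantee. For the $L$-copy states the pairwise overlaps satisfy $|\bra{\psi_i}^{\otimes L}\ket{\psi_j}^{\otimes L}| = |\bra{\psi_i}\ket{\psi_j}|^{L}\le r^{L}$, and I would build the POVM so that, writing $\delta := (N-1)r^{L}$, the probability of a conclusive (hence correct) outcome is at least $\tfrac{1-\delta}{1+\delta}$ for \emph{every} true state $s$. The natural construction is the IDP-type detector $E_s = (1+\delta)^{-1}\ket{\eta_s}\bra{\eta_s}$, where $\ket{\eta_s}$ is the unit vector orthogonal to $\mathrm{span}\{\ket{\psi_j}^{\otimes L}:j\neq s\}$ chosen along the orthogonal projection of $\ket{\psi_s}^{\otimes L}$; this vector exists precisely because $\delta<1$ forces the Gram matrix close to the identity and hence the $L$-copy states to be linearly independent. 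Two Gram-matrix estimates then deliver the bound: first $|\bra{\eta_s}\ket{\psi_s}^{\otimes L}|^{2}\ge 1-\delta$, controlling the overlap of $\ket{\psi_s}^{\otimes L}$ with the excluded span; and second $\big\|\sum_s\ket{\eta_s}\bra{\eta_s}\big\|\le 1+\delta$ (operator norm, e.g.\ by Gershgorin, since the off-diagonal overlaps $\bra{\eta_s}\ket{\eta_{s'}}$ are $O(r^{L})$), which justifies the common scaling $(1+\delta)^{-1}$ while keeping $\sum_s E_s\preceq I$. Multiplying the two gives $\bra{\psi_s}^{\otimes L}E_s\ket{\psi_s}^{\otimes L}\ge\tfrac{1-\delta}{1+\delta}$.

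With the per-round conditional inconclusive probability $q_s$ satisfying $\tfrac{1}{1-q_s}\le\tfrac{1+\delta}{1-\delta}$ for every $s$, the rest is a geometric-series argument, and to spend the budget exactly I would randomize the start: with probability $p$ output $N+1$ without measuring, and with probability $1-p$ run the repeated-POVM procedure until a conclusive outcome occurs (which happens almost surely since $q_s<1$). This makes $\mathbb{P}(\mathcal{I}|s)=p$ for every $s$, while the conditional expected number of rounds given that we run equals $\sum_{k\ge 0}q_s^{\,k}=\tfrac{1}{1-q_s}$. Hence $\mathbb{E}[L|s]=(1-p)\,L\,\tfrac{1}{1-q_s}\le(1-p)\,L\,\tfrac{1+\delta}{1-\delta}$ for every $s$, uniformly in $s$. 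Substituting the batch size $L=\tfrac{\log((N-1)/\delta)}{-\log r}$, for which $(N-1)r^{L}=\delta$, reproduces the bracketed expression, and minimizing over the free parameter $\delta\in(0,1)$ gives the theorem; non-integer $L$ is absorbed by randomizing the batch size between $\lfloor L\rfloor$ and $\lceil L\rceil$ so that its mean equals $L$.

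I expect the main obstacle to be the single-round lemma, and specifically pinning the two Gram-matrix estimates to the clean constant $\delta=(N-1)r^{L}$: one must bound, uniformly in $s$, the projection of $\ket{\psi_s}^{\otimes L}$ onto the span of the other $L$-copy states and the operator norm of $\sum_s\ket{\eta_s}\bra{\eta_s}$, using only the pairwise-overlap bound $r^{L}$ and the near-orthonormality it induces. A secondary point is verifying that the randomized-start device genuinely yields the factors $(1-p)$ and $\tfrac{1+\delta}{1-\delta}$ uniformly over $s$ rather than looser analogues, which is what makes the final optimization over $\delta$ reproduce the stated bound exactly.
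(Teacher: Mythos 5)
Your route is structurally the same as the paper's proof of Theorem \ref{theorem_2}: batch $L$ copies so the pairwise overlaps fall to $r^{L}$, use Gershgorin to place the spectrum of the Gram matrix $G$ of $\{\ket{\psi_i}^{\otimes L}\}_{i=1}^N$ inside $[1-\delta,1+\delta]$ with $\delta=(N-1)r^{L}$ (hence linear independence), repeat an unambiguous POVM on fresh batches until a conclusive outcome, and create the inconclusive probability $p$ with a Bernoulli coin flipped before anything else --- that last device is literally the paper's Lemma \ref{last_lemma}, and the batching/Gershgorin step is Section \ref{removing_linear_independence}. The paper realizes the per-batch measurement in two stages ($\{\Lambda^\dagger\Lambda, I-\Lambda^\dagger\Lambda\}$ with $\Lambda=\sqrt{W^+}/\|\sqrt{W^+}\|_2$, then projective identification of the orthogonalized post-measurement states), which is equivalent to a one-shot dual-vector POVM of the kind you write, so the difference is presentational.

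There is, however, a genuine gap in your single-round lemma: the estimate $\bigl\|\sum_s\ket{\eta_s}\bra{\eta_s}\bigr\|\leq 1+\delta$ is false in general, so your POVM need not be a POVM. The overlaps of the \emph{normalized} dual vectors are $\bra{\eta_s}\ket{\eta_{s'}}=[G^{-1}]_{ss'}/\sqrt{[G^{-1}]_{ss}[G^{-1}]_{s's'}}$, which are only of order $r^{L}/(1-\delta)$, not $r^{L}$, so Gershgorin does not deliver the constant $1+\delta$. Concretely, for three states whose Gram matrix has off-diagonal entries $G_{12}=G_{13}=\epsilon$ and $G_{23}=-\epsilon$, one computes $\bra{\eta_s}\ket{\eta_{s'}}=\pm\epsilon/(1-\epsilon)$ and the vector $(1,-1,-1)$ exhibits $\bigl\|\sum_s\ket{\eta_s}\bra{\eta_s}\bigr\|=1+\tfrac{2\epsilon}{1-\epsilon}>1+2\epsilon=1+\delta$; with your scaling $(1+\delta)^{-1}$ the element $E_{\mathrm{inc}}=I-\sum_s E_s$ then fails to be positive semidefinite. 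The fix is to \emph{not} normalize the dual vectors: define $\ket{\tilde\eta_s}$ by $\bra{\tilde\eta_s}\ket{\psi_j}^{\otimes L}=\delta_{sj}$ and set $E_s=\lambda_{\min}(G)\ket{\tilde\eta_s}\bra{\tilde\eta_s}$. Since $\sum_s\ket{\tilde\eta_s}\bra{\tilde\eta_s}=HG^{-2}H^\dagger$ has the same nonzero spectrum as $G^{-1}$, this is a valid sub-POVM, and the conclusive probability is exactly $\lambda_{\min}(G)\geq 1-\delta\geq\tfrac{1-\delta}{1+\delta}$ for every $s$ --- which is precisely what the paper's $\Lambda=\sqrt{W^+}/\|\sqrt{W^+}\|_2$ achieves, since $\bra{\psi_s}W^+\ket{\psi_s}=1$. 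A second, smaller flaw: your patch for non-integer $L$ (randomizing the batch size between $\lfloor L\rfloor$ and $\lceil L\rceil$) goes the wrong way by Jensen, because $\mathbb{E}[r^{B}]\geq r^{\mathbb{E}[B]}$, so mixing batch sizes can only increase the effective $\delta$; to be fair, the paper itself silently treats the batch size $k$ as a real number at this point, so on the integrality issue your attempt is no worse than the published argument, but the randomization claim should be dropped rather than asserted.
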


For this theorem, we rely on the simple fact that any sequential method, in expectation, uses more copies than the optimal method. Specifically, this upper bound is established using a novel non-adaptive approach, which we now present. As before, we first derive the inequality for the case where the inconclusive probability is zero, and afterwards, through Lemma \ref{last_lemma}, we generalize it.

The proposed method measures copies of the state $\ket{\psi}$ until a conclusive result is obtained. It proceeds as follows. Each copy is first measured using a two-outcome measurement, and in case the first outcome is obtained, it is then measured with an $N+1$-outcome POVM. The first measurement uses POVM $\{\Lambda^\dagger \Lambda, I - \Lambda^\dagger \Lambda\}$, where matrix $\Lambda$ is selected such that

\begin{enumerate}[$(i)$]
\item $0\preceq	 \Lambda^\dagger\Lambda  \preceq I$, 
\item and $\{\Lambda\ket{\psi_i} \}_{i=1}^N$ is an orthogonal set of vectors. 
\end{enumerate}

Without loss of generality, we adopt the convention that if the first outcome is obtained when measuring the state $\ket{\psi}$, the state transforms into $\Lambda \ket{\psi}/\sqrt{\bra{\psi} \Lambda^\dagger \Lambda \ket{\psi}}$\footnote{A POVM does not uniquely determine the post-measurement state. However, the post-measurement state of different POVM implementations varies only by a unitary operation, which is irrelevant to our analysis.}. Consequently, if the outcome of the POVM corresponds to $\Lambda^\dagger\Lambda$, then the set of post-measurement states becomes an orthonormal set. Hence, the second POVM, given by
\begin{equation}
\left \{ \frac{\Lambda \ket{\psi_1}\bra{\psi_1}\Lambda^\dagger}{\bra{\psi_1}\Lambda^\dagger\Lambda \ket{\psi_1}},\frac{\Lambda \ket{\psi_2}\bra{\psi_2}\Lambda^\dagger}{\bra{\psi_2}\Lambda^\dagger\Lambda \ket{\psi_2}},\cdots,\frac{\Lambda \ket{\psi_N}\bra{\psi_N}\Lambda^\dagger}{\bra{\psi_N}\Lambda^\dagger\Lambda \ket{\psi_N}}, I-\sum_{i=1}^N \frac{\Lambda \ket{\psi_i}\bra{\psi_i}\Lambda^\dagger}{\bra{\psi_i}\Lambda^\dagger\Lambda \ket{\psi_i}}  \right\},
\end{equation}
identifies the post-measurement state resulting from the first measurement, thereby allowing the correct identification of the initial state $\ket{\psi}$.

Now that we have explained the method, let's propose a specific matrix $ \Lambda$ that satisfies conditions ($i$) and ($ii$) under certain assumptions.

\begin{proposition}\label{proposition_1}
    Iff $\rank(W)=N$, where $W = \sum_{i=1}^N \ket{\psi_i} \bra{\psi_i}$, i.e., when the states $\{\ket{\psi_i}\}_{i=1}^N$ are linearly independent, then matrix 
    \begin{equation}
        \Lambda=\frac{\sqrt{W^+}}{\left \| \sqrt{W^+} \right \|_2} 
    \end{equation} 
    where $\|M\|_2:=\sup_{u:\|u\|_2=1} \|Mu\|_2=\sqrt{\lambda_{\max}(M^\dagger M)}$ and $M^+$ denotes the pseudoinverse of $M$, satisfies that: 
    \begin{enumerate}[$(i)$]
    \item $0\preceq	 \Lambda^\dagger\Lambda  \preceq I$, 
    \item and $\{\Lambda\ket{\psi_i} \}_{i=1}^N$ is an orthogonal set of vectors. 
    \end{enumerate}
\end{proposition}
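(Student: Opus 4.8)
The plan is to verify conditions $(i)$ and $(ii)$ directly from the spectral structure of $W$, and to extract the ``only if'' direction from a linear-independence argument. First I would record that, since $W$ is Hermitian and positive semi-definite, so are $W^+$ and $\sqrt{W^+}$; in particular $\sqrt{W^+}$ is self-adjoint, so $\Lambda^\dagger \Lambda = W^+ / \|\sqrt{W^+}\|_2^2$. Because the definition of $\|\cdot\|_2$ gives $\|\sqrt{W^+}\|_2^2 = \lambda_{\max}(W^+)$, this reads $\Lambda^\dagger\Lambda = W^+/\lambda_{\max}(W^+)$, an operator whose eigenvalues are exactly the ratios $\lambda_k(W^+)/\lambda_{\max}(W^+) \in [0,1]$. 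Condition $(i)$ is then immediate.

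The crux is condition $(ii)$, for which I would reduce orthogonality to the scalar identity $\bra{\psi_i} W^+ \ket{\psi_j} = \delta_{ij}$ (up to the common factor $1/\lambda_{\max}(W^+)$). Collecting the states into the $d\times N$ matrix $\Psi := [\,\ket{\psi_1},\dots,\ket{\psi_N}\,]$, we have $W = \Psi\Psi^\dagger$, and the claim becomes the matrix identity $\Psi^\dagger W^+ \Psi = I_N$. I expect this to be the main obstacle, and the cleanest route is the singular value decomposition $\Psi = U\Sigma V^\dagger$: since $\rank(W)=N$ forces $\Psi$ to have full column rank, $V$ is a genuine $N\times N$ unitary and $\Sigma$ is invertible, so that $W^+ = U\Sigma^{-2}U^\dagger$ and $\Psi^\dagger W^+ \Psi = V\Sigma\,\Sigma^{-2}\,\Sigma V^\dagger = VV^\dagger = I_N$. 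Equivalently, one may invoke the standard pseudoinverse fact that $\Psi^\dagger(\Psi\Psi^\dagger)^+\Psi$ is the orthogonal projector onto the row space of $\Psi$, which equals all of $\mathbb{C}^N$ precisely under full column rank. This simultaneously shows the images $\{\Lambda\ket{\psi_i}\}$ are mutually orthogonal and share the common norm $1/\sqrt{\lambda_{\max}(W^+)}$, so that after renormalization they form the orthonormal set relied upon by the method.

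For the ``only if'' direction I would argue contrapositively. If the states are linearly dependent, then $\rank(W)=r<N$, the factor $V$ above is $N\times r$, and $\Psi^\dagger W^+\Psi = VV^\dagger$ is a rank-$r$ orthogonal projector rather than $I_N$; such a projector cannot be diagonal with all nonzero entries, so either an off-diagonal overlap $\bra{\psi_i}W^+\ket{\psi_j}$ fails to vanish or some $\Lambda\ket{\psi_i}=0$, and in either case $\{\Lambda\ket{\psi_i}\}$ is not an orthogonal set of nonzero vectors. More conceptually, any nontrivial linear dependence $\sum_i c_i\ket{\psi_i}=0$ is transported by the linear map $\Lambda$ to $\sum_i c_i \Lambda\ket{\psi_i}=0$, which is incompatible with the $\Lambda\ket{\psi_i}$ being nonzero and orthogonal; hence condition $(ii)$, together with the nondegeneracy needed for the second POVM, already forces linear independence. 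Only a few routine points remain, namely checking that each $\ket{\psi_i}$ lies in the range of $W$ so that $W^+$ acts as a genuine inverse there, and that $\lambda_{\max}(W^+)>0$ so the normalization is legitimate, both of which follow from $\rank(W)=N$.
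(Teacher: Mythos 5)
Your proof is correct, and while its skeleton matches the paper's (both reduce condition $(ii)$ to the matrix identity $H^\dagger W^+ H = I_N$ with $H = [\ket{\psi_1},\dots,\ket{\psi_N}]$, after the same observation that $\Lambda^\dagger\Lambda = W^+/\lambda_{\max}(W^+)$ settles $(i)$), the tools you use at the two key steps are genuinely different. For the identity itself, the paper stays coordinate-free: it uses full column rank to get $H^+H = I_N$ and then invokes the pseudoinverse identity $(AA^\dagger)^+ = A^{\dagger\,+}A^+$, so that $H^\dagger(HH^\dagger)^+H = (H^+H)^\dagger H^+H = I_N$. You instead compute through the SVD $\Psi = U\Sigma V^\dagger$, obtaining $W^+ = U\Sigma^{-2}U^\dagger$ and $\Psi^\dagger W^+\Psi = VV^\dagger = I_N$; this is more self-contained (no quoted pseudoinverse identity) and makes the projector structure of $\Psi^\dagger W^+\Psi$ explicit, which you then reuse for the converse. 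For the ``only if'' direction, the paper argues via ranks: any admissible $\Lambda$ makes $\Lambda W\Lambda^\dagger$ a rank-$N$ operator, and $\rank(AB)\leq\min\{\rank(A),\rank(B)\}$ forces $\rank(W)=N$. Your transport argument --- a dependence $\sum_i c_i\ket{\psi_i}=0$ yields $\sum_i c_i\Lambda\ket{\psi_i}=0$, impossible for nonzero orthogonal vectors --- reaches the same conclusion (for arbitrary $\Lambda$, matching the paper's closing remark that no $\Lambda$ can exist) more directly and with less machinery. One point in your favor: both arguments need the vectors $\Lambda\ket{\psi_i}$ to be nonzero (an ``orthogonal set'' containing $0$ would trivially satisfy $(ii)$); the paper leaves this implicit in writing $\Lambda\ket{\psi_i}\bra{\psi_i}\Lambda^\dagger = \alpha_i\ket{v_i}\bra{v_i}$ with $N$ orthonormal $\ket{v_i}$, whereas you flag it explicitly via the nondegeneracy required by the second POVM.
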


\begin{proof}

    Substituting $\Lambda = \frac{\sqrt{W^+}}{\left\| \sqrt{W^+} \right\|_2}$, we obtain 
    \begin{equation}
        0\preceq\Lambda^\dagger \Lambda = \frac{\sqrt{W^+}^\dagger \sqrt{W^+}}{\lambda_{\mathrm{max}}(\sqrt{W^+}^\dagger \sqrt{W^+})}= \frac{W^+}{\lambda_{\max}(W^+)}\preceq I 
    \end{equation}
    where the equality follows from the fact that $W$ is Hermitian. As for the second condition, we prove that $\bra{\psi_i} W^+ \ket{\psi_j} = \delta_{i,j}$ when $\rank(W) = N$. To do this, we first define the matrix
    \begin{equation}
        H = \left[\ket{\psi_1}, \ket{\psi_2}, \dots, \ket{\psi_N}\right].
    \end{equation}
    Matrix $W$ can be written as $H H^\dagger$, and additionally, matrix $M_{i,j}=\bra{\psi_i} W^+ \ket{\psi_j}$ is given by $H^\dagger W^{+} H$. Since $N=\rank(W)=\rank(H)$, then $H^+ H=I_N$, where $I_N$ is the $N\times N$ identity matrix. Therefore, 
    \begin{align}
        M&=H^\dagger W^+ H=H^\dagger (H H^\dagger)^+ H \nonumber \\ &= H^\dagger H^{\dagger \,+} H^+ H= ( H^+ H)^\dagger  H^+ H =I_N
    \end{align}
    where the third equality uses that $(A A^\dagger)^+=A^{\dagger\,+} A^+$ for any matrix $A$.

    Finally, we prove that the condition $\rank(W) = N$ is a necessary condition. To see this, we first use the fact that any $\Lambda$ that satisfies the conditions $(i)$ and ($ii$) also fulfills
    \begin{equation}
    	\Lambda \ket{\psi_i} \bra{\psi_i} \Lambda^\dagger=\alpha_i \ket{v_i}\bra{v_i}
    \end{equation}
    where $\{\ket{v_i}\}_{i=1}^N$ is a set of $N$ orthonormal quantum states. Hence,
    \begin{equation}
    	 \Lambda \left(\sum_{i=1}^N \ket{\psi_i} \bra{\psi_i} \right) \Lambda^\dagger= \sum_{i=1}^N \alpha_i \ket{v_i}\bra{v_i}
    \end{equation}
    which implies that $\rank(\Lambda W \Lambda^\dagger)=N$. Furthermore, using the property
    \begin{equation}
        \rank(AB)\leq \min\{\rank(A),\rank(B)\}.
    \end{equation}
    it follows that $N=\rank(\Lambda W \Lambda^\dagger)\leq \rank(W)$. Since $W$ is the sum of $N$ rank-1 matrices, we have that $\rank(W)\leq N$. Therefore, $\rank(W)=N$. Interestingly, the condition $\rank(W) = N$ is not only necessary for $\Lambda = \frac{\sqrt{W^+}}{\left\| \sqrt{W^+} \right\|_2}$ to satisfy conditions $(i)$ and ($ii$), but also necessary for the existence of any matrix $\Lambda$ that satisfies both conditions.

\end{proof}

\subsection{Performance of the proposed method}

In this section, we analyze the performance of the method when using the $\Lambda$ matrix presented in Proposition \ref{proposition_1}, which assumes linear independence of the states $\{\ket{\psi_i}\}_{i=1}^N$. Specifically, as in the rest of the paper, the performance is measured based on the expected number of copies used.

\begin{lemma}\label{lemma_4}
     The maximum expected number of copies used, $\max_{s\in [N]}\mathbb{E}[L|s]$, when applying the method introduced in Section \ref{method_introduction}, is upper bounded as
    \begin{equation}
    	\max_{s\in [N]}\mathbb{E}[L|s] \leq \frac{\lambda_{\mathrm{max}}(W)}{\lambda_{\mathrm{min}}^+(W)}
    \end{equation}
    where $W = \sum_{i=1}^N \ket{\psi_i} \bra{\psi_i}$, and $\lambda_{\mathrm{min}}^+(W)$ denotes the smallest positive eigenvalue of $W$.
\end{lemma}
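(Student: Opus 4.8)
The plan is to describe the random process the method induces for a fixed true state, recognize it as a sequence of independent Bernoulli trials, and then bound the per-copy success probability through the spectrum of $W$.

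First I would fix the true hypothesis $S=s$ and track the procedure copy by copy. Each fresh copy of $\ket{\psi_s}$ is first subjected to the two-outcome measurement $\{\Lambda^\dagger\Lambda,\,I-\Lambda^\dagger\Lambda\}$, so the outcome $\Lambda^\dagger\Lambda$ occurs with probability $p_s:=\bra{\psi_s}\Lambda^\dagger\Lambda\ket{\psi_s}$, independently of the previous copies. I would then verify that, conditioned on this outcome, the procedure terminates with certainty: the post-measurement state is $\ket{v_s}=\Lambda\ket{\psi_s}/\sqrt{\bra{\psi_s}\Lambda^\dagger\Lambda\ket{\psi_s}}$, and by condition $(ii)$ the family $\{\ket{v_i}\}_{i=1}^N$ is orthonormal, so the second POVM returns outcome $s$ with probability $|\bra{v_s}\ket{v_s}|^2=1$ and the inconclusive outcome with probability $0$. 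Hence $L$ is exactly the index of the first copy on which the outcome $\Lambda^\dagger\Lambda$ is observed; it is geometrically distributed with parameter $p_s$, giving $\mathbb{E}[L|s]=1/p_s$.

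Next I would lower-bound $p_s$. From the proof of Proposition \ref{proposition_1}, $\Lambda^\dagger\Lambda=W^+/\lambda_{\mathrm{max}}(W^+)$, so $p_s$ is the Rayleigh quotient of $W^+/\lambda_{\mathrm{max}}(W^+)$ at $\ket{\psi_s}$. The key observation is that $\ket{\psi_s}$ lies in the range of $W$: since $W=HH^\dagger$ with $H=[\ket{\psi_1},\dots,\ket{\psi_N}]$, one has $\operatorname{range}(W)=\operatorname{range}(H)=\operatorname{span}\{\ket{\psi_i}\}_{i=1}^N\ni\ket{\psi_s}$, and for the Hermitian matrix $W$ the range of $W^+$ coincides with that of $W$. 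Consequently $\ket{\psi_s}$ has no component in the kernel of $\Lambda^\dagger\Lambda$, so its Rayleigh quotient is bounded below by the smallest positive eigenvalue, $p_s\ge\lambda_{\mathrm{min}}^+(\Lambda^\dagger\Lambda)$. A short spectral computation then gives $\lambda_{\mathrm{min}}^+(\Lambda^\dagger\Lambda)=\lambda_{\mathrm{min}}^+(W^+)/\lambda_{\mathrm{max}}(W^+)=\lambda_{\mathrm{min}}^+(W)/\lambda_{\mathrm{max}}(W)$, using $\lambda_{\mathrm{min}}^+(W^+)=1/\lambda_{\mathrm{max}}(W)$ and $\lambda_{\mathrm{max}}(W^+)=1/\lambda_{\mathrm{min}}^+(W)$.

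Combining the two steps yields $\mathbb{E}[L|s]=1/p_s\le\lambda_{\mathrm{max}}(W)/\lambda_{\mathrm{min}}^+(W)$ for every $s$, and taking the maximum over $s\in[N]$ gives the claim. The main obstacle is the range argument in the second step: without it the Rayleigh quotient could only be bounded below by $0$ (the smallest eigenvalue of the singular $\Lambda^\dagger\Lambda$), which is useless, so establishing $\ket{\psi_s}\in\operatorname{range}(W)$ is what allows replacing it by the smallest positive eigenvalue. I would also note that Proposition \ref{proposition_1} in fact yields the sharper identity $\bra{\psi_s}W^+\ket{\psi_s}=1$, hence $p_s=\lambda_{\mathrm{min}}^+(W)$ exactly and $\mathbb{E}[L|s]=1/\lambda_{\mathrm{min}}^+(W)$; the stated bound follows a fortiori, since $\operatorname{Tr}(W)=N=\rank(W)$ forces $\lambda_{\mathrm{max}}(W)\ge 1$.
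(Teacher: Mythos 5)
Your proof is correct and follows essentially the same route as the paper's: the geometric distribution of $L$ with success parameter $\bra{\psi_s}\Lambda^\dagger\Lambda\ket{\psi_s}$, the observation that $\ket{\psi_s}\perp\ker(W)$ so the Rayleigh quotient is bounded below by $\lambda_{\min}^+(\Lambda^\dagger\Lambda)$, and the same spectral identities relating $W^+$ to $W$ (your range argument just makes explicit what the paper asserts in one line). Your closing remark that Proposition \ref{proposition_1} gives $\bra{\psi_s}W^+\ket{\psi_s}=1$, hence the exact value $\mathbb{E}[L|s]=1/\lambda_{\min}^+(W)$, is a valid sharpening beyond what the paper states, but the core argument is the same.
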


\begin{proof}

Denoting by $\ket{\psi_s}$ the observed quantum state, then the probability of observing the first outcome with POVM $\{\Lambda^\dagger \Lambda, I-\Lambda^\dagger \Lambda\}$ is $\bra{\psi_s}\Lambda^\dagger\Lambda \ket{\psi_s}$. Therefore, the expected number of copies is given by
\begin{equation}
	\mathbb{E}[L|s]=\sum_{l=0}^\infty l  (1-\bra{\psi_s}\Lambda^\dagger\Lambda \ket{\psi_s})^{l-1}\bra{\psi_s}\Lambda^\dagger\Lambda \ket{\psi_s}=\frac{1}{\bra{\psi_s}\Lambda^\dagger\Lambda \ket{\psi_s}}
\end{equation}
where the second  equality uses that $\sum_{i=0}^\infty i \,\alpha^i=\alpha/(1-\alpha)^2$ for $|\alpha|<1$. Next, note that $\ket{\psi_s}\perp \text{Ker}(W)$, which leads to the following inequality,
\begin{align}
	 \bra{\psi_s} \Lambda^\dagger \Lambda \ket{\psi_s} & \geq  \lambda_{\mathrm{min}}^+(\Lambda^\dagger \Lambda) \nonumber \\& = \lambda_{\min}^+ \left( \frac{W^+}{\lambda_{\max}(W^+)} \right) \nonumber \\ & = \frac{\lambda_{\min}^+ (W^+)}{\lambda_{\max}(W^+)}= \frac{\lambda_{\min}^+(W)}{\lambda_{\max}(W)}
\end{align}
where $\lambda_{\mathrm{min}}^+(\Lambda^\dagger \Lambda)$ denotes the smallest positive eigenvalue of $\Lambda^\dagger\Lambda$. The first equality follows from $\Lambda=\sqrt{W^+}/\left \| \sqrt{W^+} \right \|_2$ and the last equality uses that $\lambda_{\max}(W^+)=1/\lambda_{\min}^+(W)$ and $\lambda_{\min}^+(W)=1/\lambda_{\max}(W^+)$. Therefore,
\begin{equation}
	 \mathbb{E}[L|s]=\frac{1}{\bra{\psi_s} \Lambda^\dagger \Lambda \ket{\psi_s}}\leq \frac{\lambda_{\mathrm{max}}(W)}{\lambda_{\mathrm{min}}^+(W)}
\end{equation}

\end{proof}

Interestingly, as indicated by Lemma \ref{lemma_4}, when the states are nearly orthogonal, the required number of copies is quite small. This is because, in such cases, all non-zero eigenvalues of $W$ are close to 1. To illustrate this more clearly, let us analyze an example, where the set of states is chosen such that they satisfy $ \| \ket{\psi_i} \bra{\psi_i} - \ket{i} \bra{i} \|_2 = \epsilon$. Specifically, we set $ \epsilon = \frac{1}{2N^2}$. Notably, for any value of $\epsilon$, the method introduced in \cite{perez2022quantum} requires $O(N)$ copies. In contrast, the last method requires fewer than 3 samples on average. To prove this result, we begin by using
    \begin{equation}
       1-N^2 \epsilon \leq \lambda_{\min}^+(W) \leq \lambda_{\max}(W)\leq 1+N \epsilon
    \end{equation}
    The bound on $\lambda_{\max}(W)$ follows from
    \begin{align}
        \lambda_{\max}(W)&=\| W\|_2\leq \left \|W- \sum_{i=1}^N \ket{i}\bra{i} \right\|_2+ \left \|\sum_{i=1}^N \ket{i}\bra{i} \right\|_2 \nonumber  \\ & \leq N \epsilon +1
    \end{align}
    For the other bound, we use
    \begin{equation}
        N=\Tr(W)\leq \lambda_{\min}^+(W)+(N-1) \lambda_{\max}(W)\leq  \lambda_{\min}^+(W) +(N-1)(1+N\epsilon )
    \end{equation}
    Therefore, performing some algebraic operations,
    \begin{equation}
        1-N^2 \epsilon \leq \lambda_{\min}^+ (W)
    \end{equation}
    Consequently,
    \begin{equation}
        \mathbb{E}[L] \leq \frac{1+N \epsilon}{1-N^2\epsilon}\leq 3
    \end{equation}
    where the second inequality follows from substituting $\epsilon$. 
    
\subsection{Removing the requirement for linear independence}\label{removing_linear_independence}

The method presented at the beginning of Section \ref{method_introduction} is based on Proposition \ref{proposition_1} and, therefore, requires the set of states to be linearly independent. However, a sequential method that unambiguously discriminates a set of states does not need to satisfy this condition. In general, the only necessary requirement is that $|\bra{\psi_i} \ket{\psi_j}| < 1 $ for all $ i \neq j $. To achieve this, we can modify the proposed method to collectively measure a set of $k$ copies, instead of measuring each copy individually. That is, the problem becomes one of distinguishing the states $ \{ \ket{\psi_i}^{\otimes k} \}_{i=1}^N $. In this case, we can apply the previous method as long as the states $ \ket{\psi_i}^{\otimes k} $ are linearly independent. We will now show that a small value of $k$ is generally sufficient to ensure the linear independence of the states. Furthermore, we show also that the ratio between the largest and smallest positive eigenvalues of $ W^{(k)} = \sum_{i} \ket{\psi_i} \bra{\psi_i}^{\otimes k} $ is small.

To do this, first, let us define matrix $H^{(k)}=[\ket{\psi_1}^{\otimes k},\ket{\psi_2}^{\otimes k},\cdots,\ket{\psi_N}^{\otimes k}]$. Importantly,  $H^{(k)^{\dagger}} H^{(k)}\rightarrow I_N$ as $k$ increases since $[H^{(k)^{\dagger}} H^{(k)}]_{i,j}=\bra{\psi_i}\ket{\psi_j}^{k} $. Therefore, using Gershgorin's circle theorem, the eigenvalues of $H^{(k)^{\dagger}} H^{(k)}$ satisfy
\begin{equation}\label{inequality_eigenvalues}
    1-(N-1) \max_{i\neq j} |\bra{\psi_i}\ket{\psi_j}|^{k}\leq \lambda\leq 1+(N-1) \max_{i\neq j} |\bra{\psi_i}\ket{\psi_j}|^{k}
\end{equation}
Therefore, for any $\delta\in (0,1)$, if $k$ satisfies
\begin{equation}
    k\geq \frac{\log(\frac{N-1}{\delta})}{-\log \max_{i\neq j} |\bra{\psi_i}\ket{\psi_j}|}
\end{equation}
then the eigenvalues belong to the interval $[1-\delta,1+\delta]$. Hence, since all eigenvalues are positive, $\rank(H^{(k)^{\dagger}} H^{(k)})=N$, which is equal to the $\rank(H^{(k)})$. That is, the states are linearly independent. Furthermore, since the singular values satisfy $\sigma(H^{(k)})=\sigma(H^{(k)\dagger})$, implying that for the non-zero eigenvalues $\lambda(H^{(k)^{\dagger}} H^{(k)})=\lambda(W^{(k)})$. That is, the non-zero eigenvalues of $W^{(k)}$ fulfill inequality \eqref{inequality_eigenvalues}. Hence, putting everything together, and using the value of $k$ specified earlier, we conclude that
\begin{equation}\label{final_equation}
    \mathbb{E}[L|s]\leq \min_{\delta \in (0,1)} \left\{\frac{\log(\frac{N-1}{\delta})}{-\log \max_{i\neq j} |\bra{\psi_i}\ket{\psi_j}|} \left(\frac{1+\delta}{1-\delta} \right) \right\}
\end{equation}
for any set of $N$ quantum states. Therefore, the modification explained in this section improves efficiency in terms of the expected number of copies, but at the cost of introducing the added complexity of performing collective measurements, which are extremely difficult to implement. Additionally, note that equation \eqref{final_equation} proves Theorem \ref{theorem_2} for $p=0$. The result can be generalized for $p\in [0,1]$ using the following lemma:
\begin{lemma}\label{last_lemma}
    For any $p\in [0,1]$, and any sequential method that unambiguously discriminates a set of states $\{\ket{\psi_i}\}_{i=1}^N$ with an inconclusive probability $\mathbb{P}(\mathcal{I})=0$ and an expected number of copies $\mathbb{E}[L|s]$, there exists another sequential method that unambiguously discriminates the same set of states $\{\ket{\psi_i}\}_{i=1}^N$ with an inconclusive probability $\mathbb{P}(\mathcal{I}|s) = p$ and an expected number of copies equal to $\mathbb{E}[L|s](1-\mathbb{P}(\mathcal{I}|s))$.
\end{lemma}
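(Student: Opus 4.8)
The plan is to construct the desired method $\mathcal{M}'$ from the given zero-inconclusive method $\mathcal{M}$ by prepending a single independent classical coin toss. Concretely, before requesting any copy of $\ket{\psi}$, the new method tosses a biased coin that lands on ``abort'' with probability $p$ and on ``continue'' with probability $1-p$, independently of everything else. If the outcome is ``abort'', the method halts immediately having requested zero copies and declares the inconclusive answer $D = N+1$. If the outcome is ``continue'', the method executes $\mathcal{M}$ verbatim and returns its decision. This fits the template of Algorithm \ref{alg_1}: the coin is a trivial randomization of the stopping criterion and of the final decision rule, and on the ``continue'' branch the iterations, copy requests, and measurements are exactly those of $\mathcal{M}$.

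Having fixed the construction, I would verify the three required properties in turn. First, $\mathcal{M}'$ is unambiguous: conditioned on ``abort'' the output is $N+1$, which by definition is never counted as an error, and conditioned on ``continue'' the output coincides with that of $\mathcal{M}$, which satisfies $\mathbb{P}(D \neq S) = 0$; since the coin is independent of $S$, we obtain $\mathbb{P}(D \neq S) = 0$ overall. Second, because $\mathcal{M}$ has $\mathbb{P}(\mathcal{I}) = 0$, the inconclusive event under $\mathcal{M}'$ occurs exactly when the coin aborts, so $\mathbb{P}(\mathcal{I}\mid s) = p$ for every $s \in [N]$, independently of the true hypothesis. Third, the number of copies used by $\mathcal{M}'$ is $0$ on the abort branch and distributed as the copy count of $\mathcal{M}$ on the continue branch, so by the law of total expectation its expected copy count equals
\begin{equation}
    p \cdot 0 + (1-p)\,\mathbb{E}[L \mid s] = (1-p)\,\mathbb{E}[L \mid s],
\end{equation}
which is precisely $\mathbb{E}[L\mid s]\,(1-\mathbb{P}(\mathcal{I}\mid s))$ after substituting $\mathbb{P}(\mathcal{I}\mid s) = p$.

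Since this lemma is essentially the reverse of Lemma \ref{lemma_3}, I expect no genuine difficulty in the computation; the only point requiring care is purely formal, namely arguing that the independent coin toss is admissible within the model of Algorithm \ref{alg_1}. This is harmless, because classical randomness can be realized by adjoining a fixed ancilla independent of the query register and measuring it --- equivalently, $\mathcal{M}'$ is just the probabilistic mixture of $\mathcal{M}$ with weight $1-p$ and the trivial method $\mathcal{M}_\perp$ that always aborts using zero copies with weight $p$. Hence no new necessary condition on the states is introduced, and the stated cost and inconclusive probability transfer directly.
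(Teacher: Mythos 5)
Your proposal is correct and coincides with the paper's own proof: both prepend an independent Bernoulli($p$) coin that immediately outputs $D=N+1$ on one outcome and runs the original method on the other, then compute the expected copy count by total expectation as $p\cdot 0 + (1-p)\,\mathbb{E}[L\mid s]$. Your additional remarks on admissibility within Algorithm \ref{alg_1} are a harmless formal elaboration of the same argument.
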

\begin{proof}
    The idea of the proof is to modify the sequential method slightly to introduce the inconclusive probability. In particular, before anything else, we sample a Bernoulli random variable $B$ with parameter $p$. If the outcome is $1$, then the method terminates and outputs $D = N + 1$. Otherwise, the original sequential method is run. From this, it follows that $\mathbb{P}(\mathcal{I} | s)=p$ for all $s\in[N]$. Let's now move on to analyze the expected value of copies. Clearly, $E[L|s,B=1]=0$, and similarly, $E[L|s,B=0]=\mathbb{E}[L|s]$. Consequently, the expected value of the new sequential method is given by $E[L|s](1-\mathbb{P}(\mathcal{I}|s))$.
\end{proof}
Interestingly, by examining the results of Theorem \ref{theorem_general_lower_bound} and Theorem \ref{theorem_2}, we observe that the gap between the two bounds is $O(\log N)$. As a result, the optimal method can only improve the performance of the proposed method by a factor of this order. In other words, while our proposed method may not be optimal, it achieves performance that is very close to optimal.

\section{Analysis for Mixed States}

In this section, we discuss generalizations of some of the main results from the previous sections to the case of mixed states. Interestingly, in Theorem \ref{theorem_general_lower_bound}, we observe that neither Lemma \ref{ref_sequential_to_fixed} nor Lemma \ref{lemma_3} makes use, at any point, of the restriction to pure states, therefore, they also apply to mixed states. Next, combining \eqref{eq_probability_of_error} with inequality
\begin{equation}
    \|\rho-\sigma\|_1\leq 2 \sqrt{1-F(\rho,\sigma)}
\end{equation}
yields 
\begin{equation}
    \mathbb{P}_{\mathrm{error}}(\{\sigma_i\}_{i=1}^N)\geq \frac{1}{N} \left(1-\sqrt{1- \max_{i\neq j}F(\sigma_i, \sigma_j)} \right)
\end{equation}
Consequently, using that $F(\rho^{\otimes K}, \sigma^{\otimes K}) = F(\rho, \sigma)^K$, we obtain an expression analogous to \eqref{expression_p_error}. Hence, an analogous lower bound for mixed states holds.
\begin{proposition}\label{generalization_theorem_general_lower_bound}
    For any sequential method that distinguishes unambiguously a set of mixed states $\{\sigma_i\}_{i=1}^N$, the maximum expected value of state copies $\max_{s\in[N]}\mathbb{E}[L|s]$ satisfies
    \begin{equation}
        \frac{1-\max_{s\in [N]} \mathbb{P}(\mathcal{I}|s)}{-\log \max_{i,j: i\neq j} F(\sigma_i,\sigma_j)}\left( \frac{\log 3}{3 }\right) \leq \left \lceil \max_{s\in[N]} \mathbb{E}[L|s]\right \rceil
    \end{equation}
\end{proposition}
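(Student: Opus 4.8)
The plan is to show that the entire proof of Theorem \ref{theorem_general_lower_bound} carries over to mixed states once the single-copy lower bound of Corollary \ref{corollary_probability_of_error} is re-expressed through the fidelity rather than the overlap. The key observation is that each of the three ingredients of that proof survives the passage to mixed states: Lemma \ref{ref_sequential_to_fixed} and Lemma \ref{lemma_3} never invoke purity (their arguments concern only the statistics of the sequential procedure and the stopping rule), so they apply verbatim, while the only place purity actually entered was in converting the trace-distance bound \eqref{eq_probability_of_error} into the overlap form of Corollary \ref{corollary_probability_of_error}.

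First I would establish the mixed-state one-shot bound. Inequality \eqref{eq_probability_of_error}, namely $\mathbb{P}_{\mathrm{error}}(\{\sigma_i\}_{i=1}^N) \geq \frac{1}{N}\big(1-\frac{1}{2}\min_{i\neq j}\|\sigma_i-\sigma_j\|_1\big)$, holds for arbitrary states, since its derivation through Lemma \ref{first_lemma} uses only POVM feasibility and never purity. Combining it with the Fuchs--van de Graaf inequality $\|\rho-\sigma\|_1 \leq 2\sqrt{1-F(\rho,\sigma)}$, where $F$ denotes the squared fidelity so that it reduces to $|\bra{\psi_i}\ket{\psi_j}|^2$ on pure states, yields
\begin{equation}
    \mathbb{P}_{\mathrm{error}}(\{\sigma_i\}_{i=1}^N) \geq \frac{1}{N}\left(1-\sqrt{1-\max_{i\neq j}F(\sigma_i,\sigma_j)}\right),
\end{equation}
which is the exact analogue of the pure-state bound used in \eqref{expression_p_error}.

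Second, I would run the zero-inconclusive argument unchanged. Applying Lemma \ref{ref_sequential_to_fixed} produces a fixed-length method on $AK\lceil\max_s\mathbb{E}[L|s]\rceil$ copies with error at most $1/K^A$; since this is at least the optimal fixed-length error, I apply the mixed-state one-shot bound to the tensor-power states $\{\sigma_i^{\otimes AK\lceil\cdots\rceil}\}_{i=1}^N$. Here multiplicativity of fidelity, $F(\sigma_i^{\otimes K},\sigma_j^{\otimes K})=F(\sigma_i,\sigma_j)^K$, plays precisely the role that $\bra{\psi_i}\ket{\psi_j}^u$ played in \eqref{expression_p_error}, producing the exponent $2AK\lceil\cdots\rceil$ on $F$. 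The same algebra then applies: substituting $A=\log_K N^{1+\delta}$, bounding $\log(2-N^{-\delta})/\log N \leq 1$, and taking the supremum over $\delta\in\mathbb{R}^+$ and $K\in\mathbb{N}$ reproduces the constant $\log 3/3$ and gives the claim for $\mathbb{P}(\mathcal{I})=0$. Finally, Lemma \ref{lemma_3}, again purity-free, upgrades this to arbitrary $\mathbb{P}(\mathcal{I}|s)<1$ through the factor $1-\max_s\mathbb{P}(\mathcal{I}|s)$, completing the proof.

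The main obstacle is not any single hard step but the bookkeeping of verifying that every ingredient is genuinely insensitive to purity. The one point that must be checked carefully is the fidelity convention in $\|\rho-\sigma\|_1\leq 2\sqrt{1-F}$: it must be the squared fidelity, since otherwise the bound would fail to specialise correctly to the overlap $|\bra{\psi_i}\ket{\psi_j}|^2$ and both the exponent and the final constant would change. With that convention fixed and the multiplicativity $F(\sigma^{\otimes K},\rho^{\otimes K})=F(\sigma,\rho)^K$ in hand, the remaining substitution is purely mechanical.
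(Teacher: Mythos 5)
Your proposal is correct and takes essentially the same route as the paper: the paper likewise observes that Lemma \ref{ref_sequential_to_fixed} and Lemma \ref{lemma_3} never invoke purity, combines \eqref{eq_probability_of_error} with the Fuchs--van de Graaf inequality $\|\rho-\sigma\|_1\leq 2\sqrt{1-F(\rho,\sigma)}$ in the squared-fidelity convention, and uses multiplicativity $F(\rho^{\otimes K},\sigma^{\otimes K})=F(\rho,\sigma)^{K}$ to rerun the algebra of Theorem \ref{theorem_general_lower_bound}. One bookkeeping slip worth fixing: the tensor-power exponent on $F$ is $AK\left\lceil \max_{s\in[N]}\mathbb{E}[L|s]\right\rceil$, not $2AK\left\lceil \max_{s\in[N]}\mathbb{E}[L|s]\right\rceil$ --- the factor $2$ in the pure-state exponent is already absorbed by $F=|\bra{\psi_i}\ket{\psi_j}|^{2}$ --- and it is exactly this that makes the denominator come out as $-\log\max_{i\neq j}F(\sigma_i,\sigma_j)$ rather than twice that quantity.
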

Note that, importantly, unlike in the pure case, for some sets of mixed states $\{\sigma_i\}_{i=1}^N$, all sequential methods that distinguish the states unambiguously satisfy $\max_{s \in [N]} \mathbb{P}(\mathcal{I} | s) = 1$. In other words, no sequential method can discriminate these states unambiguously. The necessary and sufficient condition for the existence of a sequential method that satisfies $\max_{s \in [N]} \mathbb{P}(\mathcal{I} | s) < 1$ is stated next.

\begin{proposition}
    The necessary and sufficient condition for the existence of a sequential method that unambiguously discriminates a set of mixed states $\{\sigma_i\}_{i=1}^N$, such that $\max_{s \in [N]} \mathbb{P}(\mathcal{I} \mid s) < 1$, is $\mathrm{supp}(\sigma_i) \nsubseteq \mathrm{supp}(\sigma_j)$ for all $i \neq j$.
\end{proposition}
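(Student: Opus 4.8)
First I would fix notation: let $\Pi_j$ be the orthogonal projector onto $\mathrm{supp}(\sigma_j)$. For any $\Lambda\succeq 0$ one has $\Tr(\Lambda\sigma_j)=0$ iff $\Lambda$ is supported on the orthogonal complement of $\mathrm{supp}(\sigma_j)$, and in particular $\Tr((I-\Pi_j)\sigma_s)=1-\Tr(\Pi_j\sigma_s)$ is strictly positive exactly when $\mathrm{supp}(\sigma_s)\nsubseteq\mathrm{supp}(\sigma_j)$. I would then prove the two implications separately.

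\emph{Sufficiency.} Assume $\mathrm{supp}(\sigma_i)\nsubseteq\mathrm{supp}(\sigma_j)$ for all $i\neq j$. I would exhibit an explicit elimination method (non-adaptive, indeed fixed-length): for $j=1,\dots,N$, request one fresh copy and measure it with the two-outcome POVM $\{I-\Pi_j,\Pi_j\}$, declaring hypothesis $j$ \emph{eliminated} if the outcome $I-\Pi_j$ occurs; after the $N$ rounds, output the unique surviving hypothesis if exactly one remains, and output $N+1$ otherwise. Since $\Tr((I-\Pi_j)\sigma_j)=0$, the true hypothesis $s$ is never eliminated, so any single survivor must equal $s$ and $\mathbb{P}(D\neq S)=0$. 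Conversely, each competitor $j\neq s$ is eliminated with probability $\Tr((I-\Pi_j)\sigma_s)>0$ by the pairwise support condition, and because the rounds use independent copies these events are jointly realizable, so all $N-1$ competitors are eliminated with probability $\prod_{j\neq s}\Tr((I-\Pi_j)\sigma_s)>0$; in that event $s$ is conclusively identified, giving $\mathbb{P}(\mathcal{I}\mid s)<1$ for every $s$. Note it is precisely the use of a separate copy per round that lets the \emph{pairwise} condition suffice, whereas single-copy unambiguous discrimination would demand the stronger requirement that $\mathrm{supp}(\sigma_s)$ not lie in the span of the remaining supports.

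\emph{Necessity.} Assume instead that $\mathrm{supp}(\sigma_{i_0})\subseteq\mathrm{supp}(\sigma_{j_0})$ for some $i_0\neq j_0$; I would show every unambiguous sequential method has $\mathbb{P}(\mathcal{I}\mid i_0)=1$ via an absolute-continuity argument. For a fixed terminating outcome sequence $t$, the adaptive procedure collapses to a single positive operator $\Lambda_t=M_t^\dagger M_t$ on $\mathcal{H}^{\otimes K}$, where $K=K(t)$ is the total number of copies used along $t$ and $M_t$ is the composition of the Kraus operators selected by $t$, so that $\mathbb{P}(t\mid s)=\Tr(\Lambda_t\,\sigma_s^{\otimes K})$. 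Because $\mathrm{supp}(\sigma_{i_0})\subseteq\mathrm{supp}(\sigma_{j_0})$ implies $\mathrm{supp}(\sigma_{i_0})^{\otimes K}\subseteq\mathrm{supp}(\sigma_{j_0})^{\otimes K}$, any $\Lambda_t$ orthogonal to $\mathrm{supp}(\sigma_{j_0})^{\otimes K}$ is also orthogonal to $\mathrm{supp}(\sigma_{i_0})^{\otimes K}$, whence $\mathbb{P}(t\mid j_0)=0\Rightarrow\mathbb{P}(t\mid i_0)=0$. Summing over all transcripts that yield $D=i_0$ and using that unambiguity forces $\mathbb{P}(D=i_0\mid j_0)=0$, I obtain $\mathbb{P}(D=i_0\mid i_0)=0$; since under $S=i_0$ a conclusive correct answer is exactly the event $D=i_0$, this gives $\mathbb{P}(\mathcal{I}\mid i_0)=1$ and therefore $\max_s\mathbb{P}(\mathcal{I}\mid s)=1$.

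\emph{Main obstacle.} The delicate part is the necessity direction, namely justifying the reduction of an adaptive, possibly non-projective procedure (e.g.\ the weak measurements allowed in Algorithm \ref{alg_1}) to one composite operator $\Lambda_t$ per transcript, and handling measurements with continuous outcome sets, where the sum over transcripts becomes an integral. I expect the cleanest route is to fold all Kraus operators along $t$ into $M_t$ so the argument rests only on two stable facts—support containment is preserved under tensor powers and passes to orthogonal complements of POVM elements—after which the measure-theoretic bookkeeping is routine and does not touch the support-theoretic core.
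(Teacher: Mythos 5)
Your proof is correct, and the necessity direction takes a genuinely different route from the paper's. For sufficiency, the paper simply defers to Section 4C of \cite{perez2022quantum}; your one-copy-per-hypothesis elimination scheme is self-contained and of the same spirit, so the difference there is mainly one of exposition. For necessity, the approaches diverge: the paper argues by contradiction, using Lemma \ref{lemma_3} to upgrade a method with $\max_{s}\mathbb{P}(\mathcal{I}|s)<1$ into a zero-error sequential method, restricting it to the pair $\rho_1,\rho_2$ with $\mathrm{supp}(\rho_1)\subseteq\mathrm{supp}(\rho_2)$, and then invoking the relative-entropy lower bound \eqref{lower_bound_} of \cite{martinez2021quantum}, whose right-hand side is infinite when $\mathbb{P}(\mathcal{E})=0$ and the relative entropy is finite. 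You instead work directly at the level of transcripts: the support containment makes the outcome statistics under $\sigma_{i_0}$ absolutely continuous with respect to those under $\sigma_{j_0}$ (via $\mathbb{P}(t\mid s)=\Tr(\Lambda_t\,\sigma_s^{\otimes K(t)})$ and $\Tr(\Lambda_t\,\sigma_{j_0}^{\otimes K})=0\Rightarrow\mathrm{supp}(\Lambda_t)\perp\mathrm{supp}(\sigma_{j_0}^{\otimes K})\supseteq\mathrm{supp}(\sigma_{i_0}^{\otimes K})$), so unambiguity, i.e.\ $\mathbb{P}(D=i_0\mid j_0)=0$, already kills the conclusive-correct event under $i_0$. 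Each approach buys something. The paper's is shorter given the lemmas and citations it already has in place, but it implicitly requires $\mathbb{E}[L|s]<\infty$ for the original method: Lemma \ref{lemma_3} only produces the ``bounded expected value'' that the contradiction needs when the original expectation is finite, whereas the proposition's hypothesis constrains only $\mathbb{P}(\mathcal{I}|s)$. Your transcript argument is insensitive to this, avoids the external asymptotic bound entirely, and yields the stronger conclusion $\mathbb{P}(\mathcal{I}\mid i_0)=1$ directly rather than by contradiction; its only overhead is the Kraus-operator bookkeeping you flag (which, for continuous outcome sets, is handled cleanly by partitioning transcripts according to the total copy count $K$ and applying the same support argument to the resulting positive operators). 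One shared caveat: both proofs tacitly assume the method terminates almost surely, since otherwise a procedure that never halts would vacuously satisfy $\mathbb{P}(D\neq S)=0$ with $\mathbb{P}(\mathcal{I}|s)=0$; stating that convention would close the argument in either write-up.
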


\begin{proof}
    The sufficiency is shown in Section 4C of \cite{perez2022quantum}. To show that the previous condition is necessary, we use a proof by contradiction. Specifically, assume that there exists a set of states $\{\sigma_i\}_{i=1}^N$ such that the condition $\mathrm{supp}(\sigma_i) \nsubseteq \mathrm{supp}(\sigma_j)$ for all $i \neq j$ is not satisfied, and that there exists a sequential method $A$ such that $\max_{s\in [N]} \mathbb{P}(\mathcal{I}|s)<1$. Then, there exist $i,j \in [N]$ such that $\mathrm{supp}(\sigma_i) \subseteq \mathrm{supp}(\sigma_j)$. Let us denote these states as $\rho_1$ and $\rho_2$. From \cite{martinez2021quantum}, we know that any sequential method that discriminates these two states must satisfy
    \begin{equation}\label{lower_bound_}
        \max_{s\in [2]}\mathbb{E}[L|s]\geq \frac{-\log \mathbb{P}(\mathcal{E})(1-\mathbb{P}(\mathcal{E}))}{\min\{D(\rho_1,\rho_2),D(\rho_2,\rho_1)\}}+O(1)
    \end{equation}
    Importantly, since $\mathrm{supp}(\rho_1) \subseteq \mathrm{supp}(\rho_2)$, $\min\{D(\rho_1,\rho_2), D(\rho_2,\rho_1)\}$ is finite, and the expected value grows at least as $O(-\log \mathbb{P}(\mathcal{E}))$. However, using Lemma \ref{lemma_3}, the sequential method $A$ for unambiguous state discrimination can be transformed into a sequential method with $\mathbb{P}(\mathcal{I})=0$, or equivalently, $\mathbb{P}(\mathcal{E})=0$. Therefore, this transformed method, when applied to the states $\{\rho_1, \rho_2\}$, also satisfies $\mathbb{P}(\mathcal{E})=0$ and has a bounded expected value. This leads to a contradiction with \eqref{lower_bound_} and completes the proof.
\end{proof}

Finally, the method described in Section \ref{method_introduction} can also be generalized to apply to some sets of mixed states. The procedure starts by measuring with a POVM $\{\Lambda^\dagger \Lambda, I - \Lambda^\dagger \Lambda\}$, where observing the first outcome maps the supports of the states to orthogonal subspaces. Subsequently, a POVM of projectors onto these subspaces is performed, with an additional element projecting onto the complement to guarantee that the sum of all POVM elements equals the identity.

Specifically, a matrix $\Lambda$ satisfying the previous conditions exists at least when the states $\{\ket{v^{i}_j}\}_{i,j}$, formed by the eigenvectors of the different states $\{\sigma_i\}$, are linearly independent, where $\ket{v^{i}_j}$ denotes the $j^{th}$ eigenvector of state $\sigma_i$. In this scenario, we can take 
\begin{equation}
    \Lambda = \frac{\sqrt{W^+}}{\left\| \sqrt{W^+} \right\|_2},
\end{equation}
where $W = \sum_{i=1}^N \sum_{j=1}^{d_i} \ket{v^{i}_j}\bra{v^{i}_j}$.

In summary, the lower bound presented in Theorem \ref{theorem_general_lower_bound} can be fully generalized to mixed states, as shown in Proposition \ref{generalization_theorem_general_lower_bound}. Similarly, the method proposed in Section \ref{method_introduction} can also be generalized to mixed states. However, the upper bound derived from this method is not straightforward to obtain for the mixed-state scenario.

\section{Conclusions}

In summary, by combining the results from both Theorem \ref{theorem_general_lower_bound} and Theorem \ref{theorem_2}, we have demonstrated that the performance of the optimal sequential method for unambiguous discrimination of a set of pure states, as measured by $\max_{s\in[N]}\mathbb{E}[L|s]$, is primarily determined by the inverse of the minimum pairwise Chernoff distance. Furthermore, we have shown that the dependence of $\max_{s\in[N]}\mathbb{E}[L|s]$ on the number of states, $N$, is at most $O(\log N)$. Despite these findings, an important question remains unresolved: namely, what is the optimal scaling of the expected number of samples as a function of the number of states, $N$? That is, the optimal method may have no dependency on the number of states $N$ at all, which would show that the lower bound of Theorem \ref{theorem_general_lower_bound} is tight. Alternatively, the lower bound could be improved to demonstrate that the $O(\log N)$ factor is necessary, rather than merely an artifact of the proof of Theorem \ref{theorem_2}.

Apart from these results on optimal sequential schemes, we have also introduced a novel non-adaptive method for the unambiguous discrimination of $N$ pure states and a generalization of Theorem \ref{theorem_general_lower_bound} for mixed states. This method achieves a highly competitive performance,  as the optimal strategy can improve its performance only by at most a factor of $O(\log N)$. Furthermore, the method is quite simple and has an intuitive geometric interpretation.

\section{Acknowledgments}

This work has been funded by grants PID2022-137099NB-C41, PID2019-104958RB-C41 funded by MCIN/AEI/10.13039/501100011033 and FSE+ and by grant 2021 SGR 01033 funded by AGAUR, Dept. de Recerca i Universitats de la Generalitat de Catalunya 10.13039/501100002809.

\bibliographystyle{quantum}
\bibliography{ref.bib}

\begin{appendices}

    \section{Proof of Lemma \ref{ref_sequential_to_fixed}}\label{Appendix_B}

    In this appendix, we provide the proof of Lemma \ref{ref_sequential_to_fixed}, which, for completeness, is restated here.

    \begin{lemma*}
        For any sequential method that unambiguously distinguishes a set of pure states $ \{ \ket{\psi_i} \}_{i=1}^N $ with $\mathbb{P}(\mathcal{I})=0$ using a maximum expected number of measurements $\max_{s\in [N]} \mathbb{E}[L|s] $, there exists at least one fixed-length method that uses $ AK  \left \lceil \max_{s\in[N]} \mathbb{E}[L|s]\right \rceil $ samples and distinguishes the states with a probability of error at most $ 1/K^A $, where $A,K\in \mathbb{N}$.
    \end{lemma*}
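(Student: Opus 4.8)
The plan is to turn the sequential method into a fixed-length one by \emph{truncation} followed by \emph{independent repetition} (boosting). Throughout, write $T:=\lceil \max_{s\in[N]}\mathbb{E}[L|s]\rceil$ for the rounded maximum expected length.

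The first step is to analyze a single \emph{truncated} run: execute the sequential method but stop it once it has consumed $KT$ copies without terminating. The crucial observation is that, because the method satisfies $\mathbb{P}(\mathcal{I})=0$ and discriminates unambiguously, it has $\mathbb{P}(D\neq S)=0$; hence \emph{whenever the method terminates it outputs exactly $D=S$}. Consequently, the only way a truncated run can fail to reveal $s$ is that the run has not stopped within its budget, i.e. the event $\{L\geq KT\}$. Markov's inequality then gives
\begin{equation}
    \mathbb{P}(L\geq KT\mid s)\leq \frac{\mathbb{E}[L|s]}{KT}\leq \frac{\max_{s\in[N]}\mathbb{E}[L|s]}{K\,\lceil \max_{s\in[N]}\mathbb{E}[L|s]\rceil}\leq \frac{1}{K},
\end{equation}
using $\max_{s}\mathbb{E}[L|s]\leq \lceil \max_{s}\mathbb{E}[L|s]\rceil$ in the last step. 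Thus a single truncated run uses a fixed budget of $KT$ copies and fails with probability at most $1/K$, for every hypothesis $s$.

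The second step is amplification: run $A$ independent truncated copies of the routine on disjoint, freshly requested batches of copies, consuming a total of $AKT=AK\lceil \max_{s\in[N]}\mathbb{E}[L|s]\rceil$ samples -- a genuine fixed-length method. As the decision rule, output the answer of any run that terminated (all terminating runs necessarily agree, since each returns exactly $S$), and output an arbitrary default if none did. Conditioned on $S=s$, the $A$ runs are independent, so the probability of the \emph{only} error-producing event -- that all $A$ runs fail to stop -- is at most $(1/K)^A=1/K^A$, uniformly in $s$, which is the claimed bound.

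I expect the main obstacle to be formal bookkeeping rather than the probabilistic core. One must verify that truncating the adaptive procedure yields a bona fide fixed-length method: early-stopping runs are padded with unused (discarded) copies so that exactly $AKT$ copies are consumed regardless of the outcomes, and the $A$ sub-routines act on disjoint registers so that they are genuinely conditionally independent given $S=s$. It is also worth stating explicitly why ``terminated $\Rightarrow$ correct'' holds, namely that $\mathbb{P}(D\neq S)=0$ combined with $\mathbb{P}(\mathcal{I})=0$ forces every halting output to equal $S$. Once these points are settled, no further estimates are required.
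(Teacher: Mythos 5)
Your proposal is correct and follows essentially the same route as the paper's proof: split the $AK\left\lceil \max_{s\in[N]}\mathbb{E}[L|s]\right\rceil$ copies into $A$ independent groups, use Markov's inequality to bound the probability that a single truncated run fails to terminate within $K\left\lceil \max_{s\in[N]}\mathbb{E}[L|s]\right\rceil$ copies by $1/K$, and note that since $\mathbb{P}(D\neq S)=0$ and $\mathbb{P}(\mathcal{I})=0$, any terminating run is correct, so an error requires all $A$ groups to fail, giving probability at most $1/K^A$. The only cosmetic difference is your explicit attention to padding and register bookkeeping, which the paper leaves implicit.
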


    \begin{proof}

    The fixed-length method is defined as follows:
    \begin{itemize}
        \item Divide the $ A K  \left \lceil \max_{s\in[N]} \mathbb{E}[L|s]\right \rceil $ samples into $ A $ groups.
        \item Run the sequential method independently for each group.
        \item The final decision coincides with the one given by any sequential method that has completed the procedure. If the sequential method does not finish for any group, a random decision is made.
    \end{itemize}
    To analyze this method, first, let $L_s$ denote the number of copies used by the sequential method when the true state is given by $\ket{\psi_s}$. Then, 
    \begin{equation}
        \mathbb{P}(L_s \geq K \mathbb{E}[L|s]) \leq \frac{1}{K}
    \end{equation}
    independently of the method used. This follows from Markov's inequality. Therefore,  
    \begin{equation}\label{markov_inequality_number_of_samples}
        \mathbb{P}\left(L_s > K \left \lceil \max_{s\in[N]} \mathbb{E}[L|s]\right \rceil\right) \leq \frac{1}{K}
    \end{equation}
    holds for all $s\in [N]$. Hence, 
    \begin{align}
        \mathbb{P}(\mathcal{E})&=\mathbb{P}(D\neq S)=\sum_{s=1}^N \pi_s \,\mathbb{P} ( D\neq s |S=s) \nonumber \\& \leq  \sum_{s=1}^N \pi_s \,\mathbb{P}\left(L_s > K \left \lceil \max_{s\in[N]} \mathbb{E}[L|s]\right \rceil\right)^A \nonumber \\ &\leq \frac{1}{K^A}
    \end{align}
    where the first inequality follows from the fact that an error can only occur if none of the $A$ groups has finished. This happens with probability $\mathbb{P} \left( L_s > K \lceil \max_{s\in[N]} \mathbb{E}[L|s] \rceil\right)^A$, as each group uses $K \lceil \max_{s\in[N]} \mathbb{E}[L|s] \rceil$ states. The last inequality uses \eqref{markov_inequality_number_of_samples}.

        \end{proof}

        \section{Proof of Lemma \ref{lemma_3}}\label{appendix_A}

    In this section we prove the following result.

    \begin{lemma**}
        For any sequential method that unambiguously discriminates a set of states $\{\ket{\psi_i}\}_{i=1}^N$ with an inconclusive probability $\mathbb{P}(\mathcal{I}|s)<1 $ and an expected number of copies $\mathbb{E}[L|s]$, there exists another sequential method that unambiguously discriminates the same set of states $\{\ket{\psi_i}\}_{i=1}^N$ with an inconclusive probability $\mathbb{P}(\mathcal{I}|s) = 0$ and an expected number of copies equal to $\mathbb{E}[L|s]/(1-\mathbb{P}(\mathcal{I}|s))$.
    \end{lemma**}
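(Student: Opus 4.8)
The plan is to construct the zero-inconclusive method by \emph{restarting}. Given the original method $A$ with inconclusive probability $p_s := \mathbb{P}(\mathcal{I}|s) < 1$, I define a new method $B$ that runs $A$; whenever $A$ returns the inconclusive outcome $D = N+1$, it discards all accumulated data and post-measurement states, requests fresh copies of the unknown state, and runs $A$ again, repeating until $A$ returns a conclusive decision $D \in [N]$, which $B$ then outputs. Because the query function supplies independent fresh copies on each restart and the internal measurement randomness is independent across runs, the successive executions of $A$ are i.i.d.; this independence is the structural fact the whole argument rests on.

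First I would establish that $B$ is a valid unambiguous method with zero inconclusive probability. Since $A$ satisfies $\mathbb{P}(D \neq S) = 0$, whenever $A$ produces a conclusive decision that decision equals $S$ almost surely; as $B$ simply outputs the first conclusive decision produced, $B$ also never errs, hence it is unambiguous. For termination, the number of runs $T$ until the first conclusive outcome is geometric with success probability $1 - p_s$; since $p_s < 1$ we have $T < \infty$ almost surely, so $\mathbb{P}(\mathcal{I}|s) = 0$ and $\mathbb{E}[T|s] = 1/(1-p_s)$.

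The core is the copy-count computation. Let $L^{(i)}$ be the number of copies consumed in run $i$; conditioned on $S=s$ these are i.i.d. with mean $\mathbb{E}[L|s]$, and the total usage of $B$ is $L_B = \sum_{i=1}^{T} L^{(i)}$. I would write $L_B = \sum_{i\geq 1} L^{(i)}\,\mathbf{1}[T \geq i]$ and observe that the event $\{T \geq i\}$ depends only on the outcomes of runs $1,\dots,i-1$ and is therefore independent of $L^{(i)}$. Taking expectations term by term gives $\mathbb{E}[L^{(i)}\mathbf{1}[T\geq i]\,|\,s] = \mathbb{E}[L|s]\,\mathbb{P}(T \geq i\,|\,s)$, and summing yields $\mathbb{E}[L_B|s] = \mathbb{E}[L|s]\sum_{i\geq 1}\mathbb{P}(T\geq i\,|\,s) = \mathbb{E}[L|s]\,\mathbb{E}[T|s] = \mathbb{E}[L|s]/(1-p_s)$, which is exactly Wald's identity.

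The main obstacle is precisely this last step: one must avoid the tempting but incorrect shortcut of treating $T$ as independent of the per-run copy counts, since the number of copies used in a run is in general correlated with whether that run is conclusive. The clean resolution is the independence $\{T\geq i\}\perp L^{(i)}$ stated above. Equivalently, I could condition each run on its conclusive/inconclusive label and carry the two conditional means $\mu_{\mathrm{inc}}$ and $\mu_{\mathrm{conc}}$ separately, using $\mathbb{E}[T-1|s]=p_s/(1-p_s)$ to get $\mathbb{E}[L_B|s]=\frac{p_s}{1-p_s}\mu_{\mathrm{inc}} + \mu_{\mathrm{conc}}$ and checking this collapses to $\mathbb{E}[L|s]/(1-p_s)$ after substituting $\mathbb{E}[L|s]=p_s\mu_{\mathrm{inc}}+(1-p_s)\mu_{\mathrm{conc}}$; both routes reach the stated bound.
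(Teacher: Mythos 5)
Your proposal is correct, and the underlying construction is identical to the paper's: restart the method until a conclusive answer appears, and observe that the number of runs is geometric with success probability $1-\mathbb{P}(\mathcal{I}|s)$. The only difference lies in the accounting of the expected total cost. The paper conditions on the number of runs $R=r$, writes the conditional expected cost as $(r-1)\,\mathbb{E}[L\mid D=N+1,s]+\mathbb{E}[L\mid D<N+1,s]$, sums the geometric series, and collapses the result using the law of total expectation $\mathbb{E}[L|s]=\mathbb{P}(\mathcal{I}|s)\,\mathbb{E}[L\mid D=N+1,s]+(1-\mathbb{P}(\mathcal{I}|s))\,\mathbb{E}[L\mid D<N+1,s]$ --- which is precisely the ``equivalently'' route you sketch in your final sentences. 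Your primary route instead invokes Wald's identity, justified by the independence of the event $\{T\geq i\}$ (determined by runs $1,\dots,i-1$) from the cost $L^{(i)}$ of run $i$. Both are valid: the Wald route is arguably cleaner, since it never needs to introduce the two conditional means, while the paper's route is fully elementary and self-contained, relying only on geometric-series identities. It is worth noting that both arguments handle the same subtlety you correctly flag --- a run's copy count is correlated with whether that run is conclusive --- the paper by keeping the inconclusive and conclusive conditional means separate throughout, and you by the stopping-time independence that makes Wald's identity legitimate.
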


    \begin{proof}
        The sequential method with zero inconclusive probability is obtained by running the other sequential method until a conclusive answer is obtained. The number of times the method is run is denoted by the r.v. $R$. Importantly,
        \begin{equation}
            \mathbb{P}(R=r|s)= \mathbb{P}(\mathcal{I}|s)^{r-1} (1-\mathbb{P}(\mathcal{I}|s))
        \end{equation}
        Furthermore, for $R=r$, the expected number of copies used is 
        \begin{equation}
            (r-1)\mathbb{E}[L|D=N+1,s]+\mathbb{E}[L|D<N+1,s]
        \end{equation}
        where $D=N+1$ is the inconclusive event. Therefore, the expected number of copies used by the new method is
        \begin{align}
            &\sum_{r=1}^{\infty} \mathbb{P}(R=r|s) \left( (r-1)\mathbb{E}[L|D=N+1,s]+\mathbb{E}[L|D<N+1,s] \right) \nonumber \\&=\sum_{r=1}^{\infty} \mathbb{P}(\mathcal{I}|s)^{r-1} (1-\mathbb{P}(\mathcal{I}|s)) \left( (r-1)\mathbb{E}[L|D=N+1,s]+\mathbb{E}[L|D<N+1,s] \right) \nonumber 
        \end{align}
        \begin{align}
            & = \mathbb{E}[L|D<N+1,s]  \sum_{r=1}^{\infty} \mathbb{P}(\mathcal{I}|s)^{r-1} (1-\mathbb{P}(\mathcal{I}|s)) \nonumber \\&\hspace{4cm}+\mathbb{E}[L|D=N+1,s]  \sum_{r=1}^{\infty} \mathbb{P}(\mathcal{I}|s)^{r-1} (1-\mathbb{P}(\mathcal{I}|s)) (r-1) \nonumber \\&= \mathbb{E}[L|D<N+1,s]+\mathbb{E}[L|D=N+1,s] \frac{\mathbb{P}(\mathcal{I}|s)}{1-\mathbb{P}(\mathcal{I}|s)} \nonumber \\&= \frac{(1-\mathbb{P}(\mathcal{I}|s)) \mathbb{E}[L|D<N+1,s]+\mathbb{P}(\mathcal{I}|s) \mathbb{E}[L|D=N+1,s]  }{1-\mathbb{P}(\mathcal{I}|s)} \nonumber \\&= \frac{\mathbb{E}[L|s]}{1-\mathbb{P}(\mathcal{I}|s)}
        \end{align}
        where the third equality uses that $\sum_{i=0}^\infty \alpha^i=1/(1-\alpha)$ and $\sum_{i=0}^\infty i \,\alpha^i=\alpha/(1-\alpha)^2$ for $|\alpha|<1$. The last equality uses the identities $1-\mathbb{P}(\mathcal{I}|s)=P(D<N+1|s)$, and $\mathbb{P}(\mathcal{I}|s)=\mathbb{P}(D=N+1|s)$.

    \end{proof}

    \section{Proof of Lemma \ref{ref_sequential_to_fixed_probability_of_error}}\label{appendix_C}

    In this appendix, we provide the proof of Lemma \ref{ref_sequential_to_fixed_probability_of_error}, which, for completeness, is restated here.

    \begin{lemma***}
        For any sequential method that distinguishes a set of pure states $ \{ \ket{\psi_i} \}_{i=1}^N $ with a probability of error $\mathbb{P}(\mathcal{E}|s)<\frac{K-1}{2K}$ for all $s\in [N]$, using an expected number of measurements $\mathbb{E}[L|s] $, there exists at least one fixed-length method that uses $ AK \left \lceil \max_{s\in [N]}\mathbb{E}[L|s] \right \rceil$ samples and distinguishes the states with a probability of error at most 
        \begin{equation*}
            2e^{-A\left( \frac{1}{2}- \frac{K\max_{s\in [N]}\mathbb{P}(\mathcal{E}|s) }{K-1} \right)^2}
        \end{equation*}
        where $A,K\in \mathbb{N}$ and $K\geq 7$.
    \end{lemma***}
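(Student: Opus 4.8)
The plan is to mirror the proof of Lemma~\ref{ref_sequential_to_fixed}, but to replace the ``wait until one group finishes'' rule with a \emph{plurality vote}, since now a group can finish and still output a wrong hypothesis. Concretely, I would split the $AK\lceil \max_{s\in[N]}\mathbb{E}[L|s]\rceil$ copies into $A$ groups of size $B:=K\lceil \max_{s\in[N]}\mathbb{E}[L|s]\rceil$, run the sequential method independently in each group but abort a group once it has consumed its $B$ copies, and output the hypothesis that is the plurality among the groups that terminated (guessing arbitrarily if none did). The two per-group estimates I need are: first, by Markov's inequality exactly as in Lemma~\ref{ref_sequential_to_fixed}, $\mathbb{P}(L_s > B \mid s)\leq \mathbb{E}[L\mid s]/B \leq 1/K$, so a group terminates with probability at least $1-1/K$; and second, the probability that a terminating group is wrong, $\mathbb{P}(\text{wrong}\mid \text{terminates},s)\leq \mathbb{P}(\mathcal{E}\mid s)/\mathbb{P}(\text{terminates}\mid s)\leq \frac{K\,\mathbb{P}(\mathcal{E}\mid s)}{K-1}=:q$. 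The hypothesis $\mathbb{P}(\mathcal{E}\mid s)<\frac{K-1}{2K}$ is precisely what guarantees $q<\tfrac12$, which is the conditional per-group error that drives the whole bound and explains the shape $\frac12-\frac{K\mathbb{P}(\mathcal{E}|s)}{K-1}$.

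Next I would observe that the plurality is correct whenever strictly fewer than half of the terminating groups are wrong: if $F=f$ groups terminate and fewer than $f/2$ of them err, the correct hypothesis receives more than $f/2$ votes and hence strictly beats every wrong hypothesis. Since the $A$ groups are independent copies, conditioned on a fixed set of $f$ terminating groups the number of wrong votes is stochastically dominated by a $\mathrm{Binomial}(f,q)$ variable, and Hoeffding's upper-tail bound (valid because $qf<f/2$) gives $\mathbb{P}(\text{wrong plurality}\mid F=f)\leq e^{-2f(\frac12-q)^2}$. I would then split on whether at least half the groups terminate,
\begin{equation}
\mathbb{P}(\mathcal{E}) \leq \mathbb{P}\!\left(F<\tfrac{A}{2}\right) + \mathbb{E}\!\left[e^{-2F(\frac12-q)^2}\,\mathbbm{1}\!\left(F\geq \tfrac{A}{2}\right)\right] \leq \mathbb{P}\!\left(F<\tfrac{A}{2}\right) + e^{-A(\frac12-q)^2}.
\end{equation}
The second term is already the claimed main term. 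For the first, $F=\sum_a \mathbbm{1}(\text{group }a\text{ terminates})$ has mean at least $A\tfrac{K-1}{K}$, so Hoeffding's lower-tail bound yields $\mathbb{P}(F<\tfrac{A}{2})\leq e^{-2A\left(\frac{K-2}{2K}\right)^2}$.

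The final step, and the point where $K\geq 7$ enters, is to absorb the termination-deficit term into the main term. Since $q\geq 0$ we have $(\tfrac12-q)^2\leq \tfrac14$, whereas $\frac{(K-2)^2}{2K^2}=\tfrac12(1-\tfrac2K)^2$ is increasing in $K$ and equals $\tfrac{25}{98}>\tfrac14$ already at $K=7$; hence $\mathbb{P}(F<\tfrac{A}{2})\leq e^{-A(\frac12-q)^2}$ for every $K\geq 7$, and the two contributions combine to the stated $2e^{-A(\frac12-q)^2}$. Replacing $\mathbb{P}(\mathcal{E}\mid s)$ and $\mathbb{E}[L\mid s]$ by their maxima over $s$ makes every estimate uniform in $s$. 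I expect the only delicate points to be the combinatorial justification that ``fewer than half wrong'' suffices for a correct plurality among $N>2$ candidates, and the bookkeeping that shows the terminating-group deficit is dominated by the main term exactly when $K\geq 7$; the concentration steps are then routine applications of Hoeffding's inequality.
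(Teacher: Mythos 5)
Your proposal is correct and follows essentially the same route as the paper's proof: the same $A$-group construction with Markov's inequality giving the per-group termination probability $1-1/K$, the same conditional error bound $\frac{K}{K-1}\mathbb{P}(\mathcal{E}|s)$ for terminating groups, two Hoeffding applications (one for the vote among terminated groups, one for the number of terminated groups), and the same absorption of the termination-deficit term into the main term via $(\tfrac12 - q)^2 \leq \tfrac14 \leq 2(\tfrac12-\tfrac1K)^2$ for $K\geq 7$. Your explicit attention to the plurality-vote subtlety for $N>2$ hypotheses and the stochastic-domination phrasing are slightly more careful than the paper's, but the argument is the same.
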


    \begin{proof}
    The method coincides with the one explained in Lemma \ref{ref_sequential_to_fixed}, except in the decision taken. In particular, we decide by a majority vote between the segments that have finished. In case no segment ends the procedure, the decision is taken at random. Now, let's analyze its performance. To do so, first note that
    \begin{align}
        \mathbb{P} (\mathcal{E}|s)&=\mathbb{P}\left(\mathcal{E}|L_s> K \left \lceil \max_{s\in [N]}\mathbb{E}[L|s]\right \rceil,s\right) \mathbb{P}\left(L_s> K \left \lceil \max_{s\in [N]}\mathbb{E}[L|s]\right \rceil \right) \nonumber \\&  +\mathbb{P} \left(\mathcal{E}|L_s \leq K \left \lceil \max_{s\in [N]}\mathbb{E}[L|s]\right \rceil,s \right) \mathbb{P}\left(L_s \leq K \left \lceil \max_{s\in [N]}\mathbb{E}[L|s]\right \rceil \right) \nonumber \\ & \geq \left(1-\frac{1}{K}\right) \mathbb{P}\left(\mathcal{E}|L_s \leq  K \left \lceil \max_{s\in [N]}\mathbb{E}[L|s]\right \rceil,s\right)
    \end{align}
    where the inequality uses \eqref{markov_inequality_number_of_samples}. Therefore, $\mathbb{P}(\mathcal{E}|L_s\leq K \left \lceil \max_{s\in [N]}\mathbb{E}[L|s]\right \rceil,s)\leq \frac{ K}{K-1} \mathbb{P}(\mathcal{E}|s) \leq \frac{ K}{K-1} \max_{s \in [N]}\mathbb{P}(\mathcal{E}|s)$. Now, we move on to the probability of error associated with the majority vote among the methods that have finished. For the moment, let's assume that the number of segments that have finished is $\tilde{A} \leq A$, and for these segments, we define the following random variables,
    \begin{equation}
        X_i^{(s)} =\mathbbm{1} \left\{ \text{Segment $i$ makes the wrong decision given that the correct decision is $s$} \right\}
    \end{equation}
    Hence, since in these segments $L_s \leq K \left \lceil \max_{s\in [N]}\mathbb{E}[L|s]\right \rceil$,
    \begin{equation}
        \mathbb{E}\left[X_i^{(s)}\right] = \mathbb{P}\left(\mathcal{E}|L_s \leq  K \left \lceil \max_{s\in [N]}\mathbb{E}[L|s]\right \rceil,s \right)
    \end{equation}
    Using Hoeffding's inequality, 
    \begin{equation}
        \mathbb{P}\left( \sum_{i=1}^{\tilde{A}}X_i^{(s)} -\mathbb{P}\left(\mathcal{E}|L_s \leq K \left \lceil \max_{s\in [N]}\mathbb{E}[L|s]\right \rceil,s \right) \tilde{A}\geq t \right)\leq e^{-\frac{2t^2}{\tilde{A}}}
    \end{equation}
    Taking $t= \left( 1/2-\mathbb{P}(\mathcal{E}|L_s \leq K\left \lceil \max_{s\in [N]}\mathbb{E}[L|s]\right \rceil,s) \right) \tilde{A}$ (as $t$ has to be positive, this implies that to guarantee $t\geq 0$, $\max_{s \in [N]}\mathbb{P}(\mathcal{E}|s) \leq \frac{K-1}{2K}$) the previous inequality becomes, 
    \begin{equation}
        \mathbb{P}\left( \sum_{i=1}^{\tilde{A}}X_i^{(s)} \geq  \frac{\tilde{A}}{2}\right)\leq e^{-2\tilde{A} \left(\frac{1}{2}-\frac{K\max_{s \in [N]}\mathbb{P}(\mathcal{E}|s) }{K-1}\right)^2}
    \end{equation}
    The left-hand side of the previous inequality is an upper bound on the probability of error of the majority vote. To see this, note that if $\sum_{i=1}^{\tilde{A}} X_i^{(s)} < \tilde{A}/2$, then the majority of the groups that have finished are correct, and therefore, the decision is correct. Therefore, errors can only happen when $\sum_{i=1}^{\tilde{A}} X_i^{(s)} \geq \tilde{A}/2$. Consequently,
    \begin{equation}
        \mathbb{P}(\mathcal{E}_{\mathrm{fixed}}|s,\tilde{A}) \leq  \mathbb{P}\left( \sum_{i=1}^{\tilde{A}}X_i^{(s)} \geq  \frac{\tilde{A}}{2}\right) \leq e^{-2\tilde{A} \left(\frac{1}{2}-\frac{K\max_{s \in [N]}\mathbb{P}(\mathcal{E}|s) }{K-1}\right)^2}
    \end{equation}
    where $\mathcal{E}_{\mathrm{fixed}}$ denotes the error event of the fixed-length method. Now, let's analyze $\tilde{A}$. First, we denote
    \begin{equation}
        Y_i^{(s)} = \mathbbm{1} \left\{ \text{Segment $i$ does not terminate} \right\}
    \end{equation}
    Hence, $\mathbb{E}[Y_i^{(s)}]=\mathbb{P}(L_s > K \left \lceil \max_{s\in [N]}\mathbb{E}[L|s]\right \rceil)$. Using Hoeffding's inequality,
    \begin{equation}
        \mathbb{P}\left( \sum_{i=1}^A Y_i^{(s)}- A\mathbb{P}\left(L_s > K \left \lceil \max_{s\in [N]}\mathbb{E}[L|s]\right \rceil \right) \geq t \right) \leq e^{\frac{-2t^2}{A}}
    \end{equation}
    Taking $t=A\left(\frac{1}{2}- \mathbb{P}(L_s > K \left \lceil \max_{s\in [N]}\mathbb{E}[L|s]\right \rceil ) \right)$ (this implies that $K\geq 2$ to guarantee that $t\geq 0$),
    \begin{equation}
        \mathbb{P}\left( \sum_{i=1}^A Y_i^{(s)}\geq \frac{A}{2} \right) \leq e^{-2A \left( \frac{1}{2}- \frac{1}{K}\right)^2}
    \end{equation}
    Therefore, 
    \begin{equation}
        \mathbb{P}\left( \tilde{A}\leq \frac{A}{2} \,| \,s\right) \leq e^{-2A \left( \frac{1}{2}- \frac{1}{K}\right)^2}
    \end{equation}
    Putting everything together, we have that 
    \begin{align}
        \mathbb{P}(\mathcal{E}_{\mathrm{fixed}})&=\sum_{s,\tilde{A}} \mathbb{P}(\mathcal{E}_{\mathrm{fixed}}|s,\tilde{A}) \mathbb{P}(\tilde{A},s)  \nonumber \\ &= \sum_{s,\tilde{A}} \mathbb{P}(\mathcal{E}_{\mathrm{fixed}}|s,\tilde{A}) \mathbb{P}(\tilde{A}|s) \,\pi_s \nonumber \\ &= \sum_{s,\tilde{A}>\frac{A}{2}} \mathbb{P}(\mathcal{E}_{\mathrm{fixed}}|s,\tilde{A}) \mathbb{P}(\tilde{A}|s)\,\pi_s+\sum_{s,\tilde{A}\leq \frac{A}{2}} \mathbb{P}(\mathcal{E}_{\mathrm{fixed}}|s,\tilde{A}) \mathbb{P}(\tilde{A}|s)\pi_s \nonumber  \\ & \leq e^{-A \left(\frac{1}{2}-\frac{K\max_{s \in [N]}\mathbb{P}(\mathcal{E}|s) }{K-1}\right)^2} + e^{-2A \left( \frac{1}{2}- \frac{1}{K}\right)^2}
    \end{align}
    Finally, for $K\geq 7$, as 
    \begin{equation}
        \left(\frac{1}{\sqrt{2}}-\frac{\sqrt{2}}{K}\right)\geq \frac{1}{2} \geq \left(\frac{1}{2}-\frac{K\max_{s \in [N]}\mathbb{P}(\mathcal{E}|s) }{K-1}\right) \geq 0
    \end{equation}
    it follows that
    \begin{equation}
        \mathbb{P}(\mathcal{E}_{\mathrm{fixed}})\leq 2 e^{-A \left(\frac{1}{2}-\frac{K\max_{s \in [N]}\mathbb{P}(\mathcal{E}|s) }{K-1}\right)^2}
    \end{equation}

\end{proof}

\end{appendices}

\end{document}